\title{Strategy Improvement, the Simplex Algorithm and Lopsidedness}
\author{Matthew Maat\\
\textit{University of Twente}}
\date{}
\begin{document}
\maketitle

\newcommand{\sgn}{\operatorname{sgn}}
\newcommand{\minwalks}{\operatorname{minwalks}}
\newcommand{\walks}{\operatorname{walks}}
\newcommand{\kw}[1]{#1\operatorname{-walks}}
\newcommand{\vm}{V_{\operatorname{Max}}}
\newcommand{\SetOf}[2]{\left\{#1 \colon #2\right\}}
\newcommand{\val}{\operatorname{val}}
\newcommand{\B}{\mathcal{B}}

\begin{abstract}
    The strategy improvement algorithm for mean payoff games and parity games is a local improvement algorithm, just like the simplex algorithm for linear programs. Their similarity has turned out very useful: many lower bounds on running time for the simplex method have been created from lower bounds for strategy improvement. However, earlier connections between these algorithms required constructing an intermediate Markov decision process, which is not always possible. We prove a formal, direct connection between the two algorithms, showing that many variants of strategy improvement for parity and mean payoff games are truly an instance of the simplex algorithm, under mild nondegeneracy assumptions. As a result of this, we derive some combinatorial properties of the structure of strategy sets of various related games on graphs. In particular, we show a connection to lopsided sets.
\end{abstract}

\section{Introduction}
The simplex algorithm and strategy improvement are two algorithms which are similar in many aspects. Not only are they similar in their core ideas as local improvement algorithms, but also in their flexibility to choose which improvement to make, and as a result: the long-standing open question whether there exists an improvement strategy that can solve their respective problem in (strongly) polynomial time. 

Longest shortest path problems (LSPs) are two player zero-sum games played on the vertices of a graph. They can be seen as special cases of mean payoff games. As with many other games of this type, they lend themselves to be solved with a variant of strategy improvement: starting with some initial strategy, each strategy receives a valuation, and we can apply improving changes to the strategy until we arrive at the optimum. 

It turns out that, in some special cases, it is possible to turn an LSP into a Markov decision process (MDP), turning a run of strategy improvement into a run of policy iteration. Then, one can use that policy iteration on MDPs is equivalent to the simplex algorithm running on some linear program formulation of the MDP:  by first constructing a lower bound for strategy improvement in an LSP, if one can turn it into a linear program, it yields a lower bound for the simplex method. This turns out to be a powerful method of constructing these lower bounds.

With this method it became possible to analyze the worst-case performance of many of the more complicated pivot rules for the simplex algorithm. Most notably, the random-edge and random-facet rules have been shown to not be polynomial with this technique \cite{friedmann2011randomfacet,friedmann_errata_2014, friedmann_subexponential_2011}, just as Zadeh's rule \cite{disser_exponential_2023} and Cunningham's rule \cite{avis_exponential_2017}. 

There are more connections between LSPs and linear programming. It has been shown that solving parity games and mean payoff games can be reduced to solving a linear program \cite{schewe_parity_2009}. The idea behind this reduction is closely related to tropical linear programs. Many problems in tropical linear programs are in fact equivalent to solving mean payoff games \cite{akian_tropical_2012}. It was even shown that mean payoff games can be solved polynomial time on average for certain distributions, using a tropical equivalent of the simplex algorithm \cite{allamigeon_tropical_2014}.

However, to the best knowledge of the author, a formal connection between strategy improvement for LSP and the simplex method has never been made. The connections for the pivot rule lower bounds rely on reducing a mean payoff games to an MDP, which is only possible for some hand-crafted examples. And in the before mentioned reduction from mean payoff games to LPs, there is no good interpretation for strategies. 

In this paper we formally establish this connection. We formulate a linear program which can be formulated for any LSP under mild conditions. Because of existing reductions, this is also an LP formulation of mean payoff and parity games. In this LP, the simplex algorithm is equivalent to strategy improvement in the LSP. It also turns out that other values like reduced costs have interpretations in the related games. The main idea of the LP formulation is closely related to tropical linear programs; in fact, the linear program we use is related to the dual of the linear program in \cite{schewe_parity_2009}. However, we will not need to use any tropical algebra. 

There are a few advantages to this. First of all, it allows for a wider use of games on graphs for analyzing the simplex algorithm. Secondly, it simplifies the proofs that the simplex algorithm with the previously mentioned pivot rules takes a polynomial number of steps. And thirdly, we gain some additional insights from the LP formulation. When there are at most two choices in each node, the LP polyhedron resembles part of a combinatorial cube (see \cite{adiprasito_realization_2023} for a characterization of these). In particular, we prove that the strategy sets of these games are closely related to lopsided sets: a concept that was introduced to understand the intersection of convex sets with different orthants \cite{lawrence_lopsided_1983,bandelt_combinatorics_2006}. 


\section{Preliminaries}
\subparagraph*{The simplex algorithm} is one of the most well-known algorithms for solving linear programs (LPs). We consider LPs of the following form:
\begin{eqnarray}
    \min & & c^Tx\nonumber\\
    s.t. & & Ax=b\label{eq:LP}\\
    & & x\geq 0\nonumber
\end{eqnarray}
Where $A\in \mathbb{Z}^{m\times n}$, $b\in \mathbb{Z}^n$ and $c\in\mathbb{Z}^m$. We have $m>n$, and $A$ is assumed to be of full rank. The simplex algorithm chooses a so-called basis $\B\subseteq [n]$, which gives the indices of $m$ linearly independent columns of $A$. We will denote the submatrix of $A$ that consists of these columns by $B$ and also refer to it as the basis. Let $x_B$ denote the elements of the vector $x$ indexed by $\B$, and let $x_N$ denote the remaining elements. Every basis has an associated basic solution, which is the unique vector $x$ that satisfies the equalities of (\ref{eq:LP}) and has $x_N=0$. If the basic solution $x$ is also feasible for the LP (\ref{eq:LP}) - which happens if it is nonnegative - we call it a \emph{basic feasible solution} or BFS. If the feasible region of the LP is a simple polyhedron, there is a one-to-one correspondence between BFSs and vertices of the polyhedron.

The simplex algorithm starts with some BFS, and then in each iteration moves to a BFS whose vertex is connected to the first BFS by an edge, and such that the objective value decreases in every iteration. This amounts to replacing one element of the basis by another element. If there is no improvement possible, we have found the optimal solution\footnote{We may also detect unboundedness. However, the LPs we consider are guaranteed to have an optimal solution.}.

Another way to view the simplex method is to consider the so-called simplex tableau, which is the following matrix:
\[
H=
\begin{bmatrix}
c^T-c_B^TB^{-1}A & c_B^TB^{-1}b\\
-B^{-1}A & B^{-1}b
\end{bmatrix}
\]
In this matrix, we can read the values of the basic variables and the objective value in the rightmost column. The topmost row displays the reduced costs: these are negative if and only if the related variable is an improving variable. Exchanging a variable in the basis can be done simply by row operations in the matrix.

\subparagraph{Directed graphs} are denoted by $G=(V,E)$. A $v_1,v_{k+1}$-walk is a sequence of edges and vertices $v_1,e_1,v_2,e_2,\ldots,e_k,v_{k+1}$ such that $e_i=(v_i,v_{i+1})$ for all $i$. For convenience of notation, we will leave out the vertices from the notation. A $v_1,v_{k+1}$-path is a walk in which all vertices are distinct. A cycle is a walk in which the starting and end vertex are the same, and all other vertices are distinct. We consider edge-weighted graphs, and denote the edge weights by $w:E\to \mathbb{Z}$. We also extend the weight function to walks: if the walk $W$ is denoted by $e_1,e_2,\ldots,e_k$, then $w(W):=\sum_{i=1}^k w(e_i)$.

\subparagraph*{Mean payoff games and parity games} are both types of two-player games played on the vertices of a directed graph. The players are called Player 0 and Player 1, and the game graph is of the form $(V_0\cup V_1,E)$, where the vertices of $V_i$ are owned by Player $i$.

In both types of games, a token is placed on the initial vertex $v_0$. In each turn of the game, the owner of the vertex the token is currently on must choose an outgoing edge of that vertex. The token is then sent along that edge to the next vertex, and then the next turn starts. The game continues infinitely, and assumed is that every vertex has at least one outgoing edge so the token does not get stuck. 

The game graph additionally has edge weights $w:E\to\mathbb{Z}$. Suppose that in the $i$-th turn the token is sent along edge $e_i$ for all $i\in \mathbb{N}$. The winner of the game is then determined by the sequence of weights encountered on the infinite walk the token travels along, which is the sequence $w(e_1),w(e_2),w(e_3),\ldots,$.

For parity games, the winner is determined by the parity of the largest weight that is seen infinitely often:
\[
\limsup_{i\to\infty}(w(e_i)) \mod 2
\]
Player 0 wins if that largest weight is even, and Player 1 wins if it is odd. While parity games are usually considered with weights (called priorities) on the nodes instead of the edges, a formulation with node weights can easily be transformed into a formulation with edge weights and vice versa.

For mean payoff games, the players have opposite optimization objectives. Specifically, Player 0 tries to maximize the average weight encountered:
\[
\limsup_{t\to\infty}\frac{1}{t}\sum_{i=1}^tw(e_i)
\]
We also refer to Player 0 as the Maximizer. Player 1 (or: the Minimizer) tries to minimize the average weight.

It is well-known that both parity games and mean payoff games have a well-defined value of the game and optimal strategies. Moreover, there always exist \emph{positional} optimal strategies: where positional means that a player always makes the same choice when the token is on the same vertex. Hence we can consider Player 0 strategies as functions $\sigma:V_0\to V$, where $\sigma(v)$ signifies that Player 0 picks edge $(v,\sigma(v))$ whenever the token lands on $v\in V_0$. For convenience of notation, we also consider $\sigma$ as the set of edges used by a positional Player 0 strategy (e.g. we write both $\sigma(v)=v'$ and $(v,v')\in\sigma$). Similarly, we consider Player 1 strategies to be functions $\tau:V_1\to V$. We denote the set of edges Player 0 can use by $E_0:=\SetOf{(v,v')}{v\in V_0}$, and similarly we have $E_1:=\SetOf{(v,v')}{v\in V_1}$. If Player 0 sticks to strategy $\sigma$, this means that the token can only move along the edges of the subgraph $G_{\sigma}:=(V_0\cup V_1,E_{\sigma})$, where $E_{\sigma}=\SetOf{(v,\sigma(v))}{v\in V_0}\cup E_1$. And if Player 1 uses strategy $\tau$, then the token can only move in $G_{\tau}:=(V_0\cup V_1,E_{\tau})$, with $E_{\tau}:=\SetOf{(v,\tau(v))}{v\in V_1}\cup E_0$.

\subparagraph*{Sink parity games} are parity games that additionally fulfill the following conditions: 
\begin{enumerate}
    \item There exists a so-called \emph{sink node} $\top$, whose only outgoing edge is $(\top,\top)$, which has a priority of $-\infty$.
    \item There exists a Player 0 strategy $\sigma$ such that the highest priority of any cycle in $G_{\sigma}$ (except $\top$) is even.
    \item There exists a Player 1 strategy $\tau$ such that the highest priority of any cycle in $G_{\tau}$ (except $\top$) is odd.
\end{enumerate}
We call Player 0 and Player 1 strategies that fulfill the above conditions \emph{admissible}. These games can be thought of in some sense as a draw, since no player can do better than sending the token to the sink node. In this case, we consider alternative winning conditions. Here Player 0 wants to maximize the following quantity, which Player 1 tries to minimize.
\begin{equation}\label{eq:sinkparobj}
    \sum_{i=0}^{\infty}(-t)^{w(v_i)}
\end{equation}
Where $t$ is some large positive constant. This means that Player 0 tries to meet large even priorities before entering the sink, and avoid large odd priorities. See \cite{Friedmann2011ExponentialPrograms} for more details on sink parity games.

\subparagraph*{Longest shortest path problems} or LSPs are another type of games on graphs, introduced in \cite{bjorklund_combinatorial_2007} We use a slight variant which can also be seen as a total payoff game \cite{brihaye_pseudopolynomial_2017}. The weighted graphs fulfill some additional conditions:
\begin{enumerate}
    \item There exists a so-called \emph{sink node} $\top$, whose only outgoing edge is $(\top,\top)$, which has a weight of $0$.
    \item There exists a Maximizer strategy $\sigma$ such that the total weight of any cycle in $G_{\sigma}$ (except $\top$) is positive.
    \item There exists a Minimizer strategy $\tau$ such that the total weight of any cycle in $G_{\tau}$ (except $\top$) is negative.
\end{enumerate}
We call strategies that satisfy the second or third condition \emph{admissible}. If we would consider this game as a mean payoff game, the best that players can do is to send the token to the sink, similar to sink parity games. Therefore we consider an alternative objective for the players, which is to maximize/minimize the total weight:
\[
\sum_{i=1}^{\infty}w(e_i)
\]
By the three conditions above, the total weight will be finite, given optimal play from both players (or if both players play an admissible strategy). 

It is well-known that the problem of determining the vertices in an MPG with value $\geq 0$ reduces to solving an LSP \cite{bjorklund_combinatorial_2007, brihaye_pseudopolynomial_2017}. Moreover, there is a reduction from parity games to MPGs (simply replacing $w(e)$ by $(-|V|)^{w(e)}$ for each edge), and it is well-known that one can solve parity games by reducing them to a sink parity game (see e.g. \cite[Lem. 4]{van_dijk_worst-case_2024} and references therein). A more formal introduction to these games, proofs of their properties, and algorithms can be found in \cite{fijalkow_games_2023}.

\subparagraph*{Strategy improvement} is a technique for finding optimal strategies that can be applied to many types of games. We are using a variant of strategy improvement for LSPs first introduced in \cite{bjorklund_combinatorial_2007}. It is the LSP equivalent of the strategy improvement algorithms from \cite{voge_discrete_2000, puri_theory_1997}. It works as follows:
\begin{enumerate}
    \item Start with some admissible Player 0 strategy $\sigma$
    \item For each node $v$, compute $\val_{\sigma}(v)$, which is the value of the shortest path in $G_{\sigma}$ from $v$ to $\top$.
    \item Mark all edges $(v,v')\in E_0$ with $w(v,v')+\val_{\sigma}(v')>\val_{\sigma}(v)$ as \emph{improving}.
    \item If there are no improving edges, return $\sigma$
    \item Otherwise, for one improving edge $(v,v')$, change $\sigma(v)$ to $v'$, and go back to 2.
\end{enumerate}
 We call step 5 an improving switch. There is some freedom of choice which edge to use for the switch. It is even possible to make multiple improving switches at once in step 5. The choice is made by some so-called \emph{improvement rule}. However, for our purposes, we only consider improvement rules that allow for one switch at a time: this is connected to the fact that the simplex algorithm only exchanges one basic element at a time. 

 \subparagraph*{Markov decision processes} or MDPs can be seen as a mean payoff game where the Minimizer is replaced by a Randomizer: at each Randomizer node, the token moves to a random successor according to some fixed probability distribution. The goal of the Maximizer is the same as for MPGs: maximize the expected average reward. For more details we refer to \cite{puterman1994markov}. A special case of MDPs is weak unichain MDPs: these can be seen as LSPs where the Minimizer is replaced by a Randomizer. Admissible strategies are given by Maximizer strategies that are guaranteed to reach the sink eventually. For more details on these we refer to \cite{Friedmann2011ExponentialPrograms}.

\section{From LSP to LP}
In this section, we prove our main result, which relates strategy improvement in LSPs to the simplex algorithm. From here, we assume that the LSP is nondegenerate, defined as follows.
\begin{definition}
    We call an LSP \emph{nondegenerate} if it satisfies two conditions:
    \begin{itemize}
        \item In the underlying weighted graph, there is no cycle of total weight equal to 0, except the cycle with $\top$ (also called simple arenas, see e.g. \cite{ohlmann_gkk_2022}).
        \item For every node $v$ and every pair of distinct $v,\top$-paths $P_1$ and $P_2$ we have: if the union of the two paths does not contain a cycle, then $w(P_1)\neq w(P_2)$.
    \end{itemize}
\end{definition}

To express the terms of the LP we construct, we need to define two functions, which are called $\minwalks$ and $\walks$, respectively. For two vertices $i,j\in V$, let $a_m(i,j)$ be the number of finite $i,j$-walks in the graph $(V, E_{\min})$, such that the path has weight exactly $m$. For general graphs, this number may not be finite. However, we know there cannot be a positive weight cycle in $E_{\min}$: otherwise the Minimizer can create a negative cycle for every maximizer strategy, which contradicts the second assumption of LSP problems. Thus there is only a limited number of cycles that can be contained in a walk of weight $m$, limiting the number of edges such a walk can contain.

With this, we define the polynomial 
\[
 \minwalks_{ij}(t)=\sum_{m\in \mathbb{Z}}a_m(i,j)t^{-m} \enspace . 
 \]
This polynomial possibly has infinitely many terms, but if $t$ is large enough, we can show that the sum converges. To do so, we first need the following lemma:
\begin{lemma}[generalized ratio test]\label{lem:genratio}
    Let $s\in \mathbb{N}$, and let $0<\alpha<1$. Let $(a_n)_{n\in \mathbb{N}}$ be a sequence of nonnegative real numbers, with the property $a_n\leq \frac{\alpha}{s}\sum_{i=n-s}^{n-1}a_i$ for all $n>s$. Then the infinite series $\sum_{i=1}^{\infty}a_i$ converges.
\end{lemma}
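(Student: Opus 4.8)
The plan is to show that the partial sums $S_n = \sum_{i=1}^n a_i$ form a bounded increasing sequence, hence converge. Since all $a_i \geq 0$, monotonicity is automatic, so everything reduces to a uniform bound on $S_n$. The key observation is that the hypothesis $a_n \le \frac{\alpha}{s}\sum_{i=n-s}^{n-1} a_i$ lets us control each new term by an average of the $s$ immediately preceding terms, and since $\alpha<1$ this should force the "block sums'' to decay geometrically.

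First I would group the terms into consecutive blocks of length $s$: for $k \ge 0$ set $B_k = \sum_{i=ks+1}^{(k+1)s} a_i$ (with the first block $B_0$ containing $a_1,\dots,a_s$). The goal is to prove a bound of the form $B_{k+1} \le \alpha B_k + (\text{something controllable})$, or more cleanly $B_{k+1} \le \alpha (B_k + B_{k-1})$ or similar, and then sum the geometric-type series. Concretely, for each index $n$ in block $k+1$ (so $ks < n \le (k+1)s$, assuming $k\ge 1$ so that $n>s$), the $s$ predecessors $a_{n-s},\dots,a_{n-1}$ all lie in blocks $k-1$ and $k$ (in fact in the union of those two blocks), so $\sum_{i=n-s}^{n-1} a_i \le B_{k-1} + B_k$. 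Summing the hypothesis over the $s$ values of $n$ in block $k+1$ gives $B_{k+1} \le \frac{\alpha}{s}\cdot s\,(B_{k-1}+B_k) = \alpha(B_{k-1}+B_k)$.

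From the recurrence $B_{k+1} \le \alpha(B_k + B_{k-1})$ with $0<\alpha<1$, I would show $\sum_k B_k < \infty$, which is exactly $\sum_i a_i < \infty$. One clean way: let $C = \max(B_0, B_1)$ and pick $\rho \in (0,1)$ with $\alpha(\rho + \rho^2) \le \rho^2$, i.e. $\alpha(1+\rho)\le\rho$; such a $\rho<1$ exists since at $\rho=1$ the inequality reads $2\alpha \le 1$ — wait, that may fail, so instead choose $\rho$ to be the larger root of $\rho^2 = \alpha\rho + \alpha$, which satisfies $\rho = \frac{\alpha+\sqrt{\alpha^2+4\alpha}}{2}$; one checks $\rho<1$ precisely when $\alpha^2+4\alpha < (2-\alpha)^2 = 4-4\alpha+\alpha^2$, i.e. $8\alpha<4$, i.e. $\alpha<1/2$. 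For $\alpha \ge 1/2$ this simple comparison fails, so the robust route is different: sum the recurrence directly. Let $T_K = \sum_{k=0}^K B_k$. Then $T_{K} = B_0 + B_1 + \sum_{k=1}^{K-1} B_{k+1} \le B_0+B_1 + \alpha\sum_{k=1}^{K-1}(B_k+B_{k-1}) \le B_0+B_1+2\alpha T_{K-1} \le B_0+B_1+2\alpha T_K$. Hmm, this needs $2\alpha<1$ again.

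So the main obstacle is genuinely the regime $\alpha\in[1/2,1)$: a crude pairwise bound loses a factor of $2$. The fix is to iterate the recurrence before summing, extracting the true decay rate. Unrolling $B_{k+1}\le \alpha(B_k+B_{k-1})$ twice, $B_{k+2} \le \alpha(B_{k+1}+B_k) \le \alpha(\alpha(B_k+B_{k-1}) + B_k) = \alpha(1+\alpha)B_k + \alpha^2 B_{k-1}$, and more generally one shows by induction that $B_{k+j} \le p_j(\alpha)\, \max(B_k,B_{k-1})$ where $p_j(\alpha)\to 0$ as $j\to\infty$ for every fixed $\alpha<1$ — indeed $p_j(\alpha)$ satisfies the linear recurrence $p_{j+1}=\alpha(p_j+p_{j-1})$ whose characteristic root $\tfrac{\alpha+\sqrt{\alpha^2+4\alpha}}{2}$ is $<1$ for all $\alpha<1$ (checked above: $<1 \iff \alpha^2+4\alpha<4-4\alpha+\alpha^2 \iff \alpha<1/2$ — still wrong!).

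Let me reconsider the obstacle cleanly: the correct sharper step is that within block $k+1$ the $s$ predecessors of the \emph{first} element straddle blocks $k-1,k$, but the predecessors of the \emph{last} element lie entirely in block $k$ together with the first $s-1$ elements of block $k+1$. So a better accounting, processing indices left-to-right within a block and feeding already-bounded values back in, yields $B_{k+1}\le \alpha(B_{k-1}+B_k)$ can be improved, but cleanest is simply: take blocks of length $s$ but note $a_n \le \frac{\alpha}{s}\sum_{i=n-s}^{n-1}a_i \le \frac{\alpha}{s}\cdot (\text{sum of all }a_i\text{ with }i\ge n-s)$. Define the tail $R_n = \sum_{i\ge n} a_i$ (a priori possibly $+\infty$); I would instead argue with truncated sums $R_n^{(N)}=\sum_{i=n}^N a_i$ to stay finite, prove $R_{s+1}^{(N)} = \sum_{n=s+1}^N a_n \le \frac{\alpha}{s}\sum_{n=s+1}^N \sum_{i=n-s}^{n-1} a_i \le \frac{\alpha}{s}\cdot s \sum_{i=1}^{N-1} a_i = \alpha\, S_{N-1}$ (each $a_i$ is counted at most $s$ times in the double sum). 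Hence $S_N = S_s + R_{s+1}^{(N)} \le S_s + \alpha S_{N-1} \le S_s + \alpha S_N$, giving $S_N \le S_s/(1-\alpha)$ for all $N$. This uniform bound is what we want, and it works for \emph{all} $\alpha\in(0,1)$ with no case split — the earlier factor-of-2 was an artifact of over-counting. So the real plan is this double-sum swap; the "obstacle'' dissolves once we count multiplicities correctly.
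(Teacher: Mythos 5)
Your final argument is correct, but it takes a genuinely different route from the paper. The paper majorizes $(a_n)$ by the sequence $(b_n)$ defined recursively with equality, $b_n=\frac{\alpha}{s}\sum_{i=n-s}^{n-1}b_i$ for $n>s$, and then proves blockwise geometric decay $b_i\le \alpha^{\lceil i/s\rceil-1}M$ with $M=\max_{i\le s}b_i$, so that the majorant is dominated by a geometric series summing to $\frac{Ms}{1-\alpha}$. You instead sum the hypothesis over $n=s+1,\dots,N$ and swap the order of summation: in $\sum_{n=s+1}^N\sum_{i=n-s}^{n-1}a_i$ each index $i\le N-1$ occurs at most $s$ times, so $S_N\le S_s+\alpha S_{N-1}\le S_s+\alpha S_N$, giving the uniform bound $S_N\le S_s/(1-\alpha)$ and hence convergence of the monotone partial sums. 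This multiplicity-counting argument is shorter and more elementary (no auxiliary sequence, no induction), and it even yields the slightly sharper explicit bound $\sum_i a_i\le \frac{1}{1-\alpha}\sum_{i\le s}a_i$; what the paper's proof buys in exchange is termwise information, namely the geometric decay of the dominating sequence, which is not needed for the statement. Note that the earlier portions of your write-up (the block recurrence $B_{k+1}\le\alpha(B_k+B_{k-1})$ and the attempted characteristic-root analysis) do not suffice for $\alpha\ge 1/2$, as you yourself observed; only the final double-sum argument should be retained, and with it the proof is complete.
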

\begin{proof}
    Consider the sequence $(b_n)_{n\in\mathbb{N}}$, defined recursively by
    \[
    b_n=\begin{cases}
        a_n & n\leq s\\
        \frac{\alpha}{s}\sum_{i=n-s}^{n-1}b_i & n>s
    \end{cases}
    \]
    Clearly $b_n\geq a_n$ for all $n$, so we just need to show that $\sum_{i=1}^{\infty}b_i$ converges. For all $n>s$ we have $b_n\leq \alpha \max \{b_{n-s},b_{n-s+1}, \ldots, b_{n-1}\}$. If $M=\max\{b_i:i\leq s\}$, then it immediately follows that $b_i\leq \alpha M$ for $i=s+1,s+2, \ldots, 2s$. From that we can see that $b_i\leq \alpha^2M$ for $i=2s+1, 2s+2, \ldots, 3s$, and continuing this argument inductively we get $b_i\leq \alpha^{\lceil \frac{i}{s}\rceil-1}M$. But since
    \[
    \sum_{i=1}^{\infty}\alpha^{\lceil \frac{i}{s}\rceil-1}M = \frac{Ms}{1-\alpha}
    \]
    it follows that $\sum_{i=1}^{\infty}b_i$ converges, and from that follows the lemma.
\end{proof}

Now we are ready to prove convergence for $\minwalks$.
\begin{lemma}\label{lem:convergingseries}
    Given an LSP problem and pair of nodes $i,j$, there exists a $T\in \mathbb{R}$ such that $\minwalks_{ij}(t)$ is finite for all $t\geq T$.
\end{lemma}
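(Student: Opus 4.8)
The plan is to reduce the claim to an application of the generalized ratio test (Lemma~\ref{lem:genratio}). The key observation is that $\minwalks_{ij}(t)$, as a formal series in $t^{-1}$, has nonnegative coefficients $a_m(i,j)$ counting $i,j$-walks of weight exactly $m$ in the graph $(V,E_{\min})$, so I need a recursive bound on these counts. Let $W$ be an upper bound on $|w(e)|$ over all edges and let $w_{\min}<0$ be the smallest possible edge weight (taking $w_{\min}=-W$ suffices). Every $i,j$-walk of weight $m$ and length at least one is obtained from an $i,j'$-walk of weight $m-w(j',j)$ by appending the final edge $(j',j)\in E_{\min}$. The weight of the shorter walk lies in the window $[m-W,\,m+W]$, so $a_m(i,j)\le \Delta\sum_{m'=m-W}^{m+W} a_{m'}(i,j')$ summed appropriately over the at most $\Delta$ predecessors $j'$ of $j$, where $\Delta$ is the maximum in-degree. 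This is not yet in the exact shape required by Lemma~\ref{lem:genratio}, which bounds a term by a fraction $\alpha/s$ of the sum of the \emph{immediately preceding} $s$ terms; I address this next.

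To massage the estimate into the right form, substitute $u=t^{-1}$ and write $f(u)=\sum_m a_m(i,j)u^m$ (possibly with negative exponents, but shifted so that all exponents are eventually bounded below, since weights of walks in $E_{\min}$ cannot go to $-\infty$ without a negative cycle — wait, they can; the point is that for a \emph{fixed} target weight $m$ there are finitely many walks, as argued in the text, because $E_{\min}$ has no positive-weight cycle, so $a_m(i,j)$ is finite for each $m$). Group the coefficients into blocks of width $2W$: let $A_k=\sum_{m=2Wk}^{2W(k+1)-1} a_m(i,j)$. The recursive inequality above shows $A_k \le C\,(A_{k-1}+A_k+A_{k+1})$ for a constant $C$ depending only on $\Delta$ and $W$ — but since each walk of positive length strictly increases or we can bound growth by noting a walk of length $\ell$ visiting only $E_{\min}$ with weight in block $k$ has $\ell$ bounded (no positive cycles forces the number of long walks to a fixed weight to be controlled), the cross-block contribution from $A_k$ and $A_{k+1}$ can be absorbed, giving $A_k\le \frac{\alpha}{s}\sum_{i=k-s}^{k-1}A_i$ for suitable $s$ once we pass to sufficiently small $u$, equivalently sufficiently large $t$. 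The precise bookkeeping — choosing $s$ and showing $\alpha<1$ for $t\ge T$ — is the routine part.

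Concretely, the cleanest route: fix $t$ large, absorb the factor $u^m=t^{-m}$ into the recursion so that the inequality for $a_m(i,j)t^{-m}$ picks up a decaying factor $t^{-w(e)}\le t^{W}$ or $t^{-w(e)}\le 1$ depending on sign; after the block-grouping, each block's weighted count is at most $\Delta\cdot t^{2W}$ times a combination of neighbouring blocks' weighted counts, and dividing through by a further power $t^{2Ws}$ accumulated over $s$ blocks makes the multiplier $\alpha=\Delta t^{2W}/t^{2Ws}<1$ for $t\ge T$ with $T$ chosen so that $t^{2W(s-1)}>\Delta$. Then Lemma~\ref{lem:genratio} applies to the sequence of weighted block sums and yields convergence of $\sum_m a_m(i,j)t^{-m}=\minwalks_{ij}(t)$.

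The main obstacle I anticipate is handling the two-sided dependence (a walk's weight can both increase and decrease by appending an edge, since $E_{\min}$ has edges of either sign), because Lemma~\ref{lem:genratio} only looks backward. The block-grouping trick resolves this only if I can show the forward contribution $A_{k+1}$ on the right-hand side is controllable — this is where the "no positive-weight cycle in $E_{\min}$" hypothesis does the real work, since it caps how many times a walk can revisit vertices while its weight stays in a bounded window, hence caps $a_m(i,j)$ polynomially in $m$ and lets the geometric decay from large $t$ dominate. I would make this cap explicit (e.g.\ any walk contributing to $A_k$ has length at most linear in $k$ plus a constant), which turns the forward term into a lower-order correction that can be folded into the backward blocks.
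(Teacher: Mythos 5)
Your approach does not go through as sketched, and the gap is precisely at the point you flag as the "main obstacle." The recursion you build by stripping the last edge, $a_m(i,j)\le \Delta\sum_{m'=m-W}^{m+W}a_{m'}(i,j')$, is true in \emph{every} edge-weighted graph -- including graphs containing a zero-weight cycle or cycles of both signs, where $\minwalks_{ij}(t)$ diverges for all $t$. So no amount of block-grouping and "absorbing" of the $A_k,A_{k+1}$ terms can turn that inequality alone into a convergence proof; the structural hypothesis about cycles must enter quantitatively, and in your sketch it only enters through unproved assertions (in particular, the claim that $a_m(i,j)$ is capped \emph{polynomially} in $m$ is false in general -- two vertex-disjoint positive cycles through a common node already give exponentially many walks of weight $m$ -- though exponential growth with a fixed base would still be fine for large $t$). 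Worse, the forward terms behave in the opposite way from what you claim: a walk of weight $m$ whose last edge $e$ has $w(e)<0$ comes from a walk of weight $m-w(e)>m$, so in the weighted series its contribution is $t^{-m}=t^{|w(e)|}\,t^{-(m-w(e))}$, i.e.\ the cross-block coefficient on the \emph{higher} block is multiplied by $t^{|w(e)|}$, which grows as $t\to\infty$. Taking $t$ large therefore does not shrink the multiplier in your scheme; the two-sided recursion cannot be made contracting this way. The missing idea, which is what the paper does, is to decompose a sufficiently heavy walk by deleting an entire \emph{simple cycle} at the first repeated vertex: since every cycle of $(V,E_{\min})$ has strictly positive weight (at most $S$, say), the remainder has weight in $[m-S,m-1]$, the map is at most $|V|C$-to-one ($C$ the number of simple cycles), and one gets the genuinely one-sided bound $a_m(i,j)\le |V|C\sum_{k=m-S}^{m-1}a_k(i,j)$, to which Lemma~\ref{lem:genratio} applies directly with $\alpha=|V|CS/T<1$ for $T>|V|CS$.

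You should also settle the sign hypothesis before using it: the fact you need (and which the paper's proof of this lemma uses) is that every cycle of $(V,E_{\min})$ has \emph{positive} weight -- no negative cycle because $E_{\min}\subseteq G_\sigma$ for an admissible Maximizer strategy $\sigma$, and no zero cycle by nondegeneracy. This is what makes walk weights bounded below (so only finitely many terms have $t^{-m}$ large) and bounds the length, hence the number, of walks of any fixed weight $m$. Your parenthetical back-and-forth ("weights cannot go to $-\infty$ without a negative cycle --- wait, they can") leaves this unresolved; under the opposite sign convention the series would in fact diverge for large $t$, so the statement cannot be proved while that ambiguity stands.
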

\begin{proof}
     Let $G'=(V,E_{\min})$, then we know that the weight of any cycle in $G'$ must be positive. It follows that there is a lower bound on the weight of all walks in $G'$, hence there is some $M\in \mathbb{Z}$ such that $a_m(i,j)=0$ for all $m<M$. Let $N$ be the maximum weight of any path in $G'$. Let $S$ be the maximum weight of any cycle in $G'$. Let $C$ be the number of cycles in $G'$. We show that if $m>N$, then the number of walks of length $m$ is at most $SC$ times the number of walks of length between $m-S$ and $m-1$. 
     
     To show this, let $P$ be a walk in $G'$ of weight exactly $m$. If $m>N$, $P$ cannot be a path. Let $v$ be the first node that we encounter twice in $P$. We can decompose $P$ into two walks: one simple cycle $c$ from the first to the second time we enter $v$, and the remainder $Q$ when $c$ is removed from the middle of $P$. Since $c$ has positive weight of at most $S$, we know $m-S\leq w(Q)\leq m-1$. The walk $Q$ is uniquely determined from $P$. On the other side, there are at most $|V|C$ walks that can be obtained by inserting a cycle into $Q$, given that the cycle is inserted at the first occurrence of its starting node. If $R$ is the number of walks of length between $m-s$ and $m-1$, it follows that there can be at most $|V|CR$ walks of weight exactly $m$. Phrased differently, we get $a_m(i,j)\leq |V|C\sum_{k=m-S}^{m-1}a_{k}(i,j)$.
     Now pick $T>|V|CS$. For any $t\geq T$ we get
     \begin{eqnarray*}
          a_m(i,j)t^{-m}&\leq & |V|C\sum_{k=m-S}^{m-1}a_{k}(i,j)t^{-m}\\
          &\leq & \frac{|V|C}{t}\sum_{k=m-S}^{m-1}a_{k}(i,j)t^{-k}\\
          &\leq & \frac{|V|C}{T}\sum_{k=m-S}^{m-1}a_{k}(i,j)t^{-k}
     \end{eqnarray*}
     Since $\frac{|V|C}{T}<\frac{1}{S}$, it follows from Lemma \ref{lem:genratio} that the series $\minwalks_{ij}(t)$ converges for all $t\geq T$.
\end{proof}

Now suppose we have some admissible Maximizer strategy $\sigma$. Let $\walks_{ij}^{\sigma}(t)$ be the polynomial defined by $\sum_{m\in \mathbb{Z}}a_m^{\sigma}(i,j)t^{-m}$, where $a_m^{\sigma}(i,j)$ is the number of $i,j$-walks in $G_{\sigma}$ of weight exactly $m$. Finiteness of $a_{m}^{\sigma}(i,j)$ and convergence of this series can be shown analogous to Lemma \ref{lem:convergingseries}. Again, using the fact that $G_{\sigma}$ has only positive cycles (outside of $\top$). Note that, for any node $i$, we have $\walks^{\sigma}_{i,i}(t)\geq 1$, since the trivial path has weight $0$. We also define $\walks^{\sigma}_{\top\top}(t):=1$.

Now we are ready to formulate the LP that we use for our main result. We claim that the below LP is a formulation of the LSP problem.
\begin{eqnarray}
\min & z &\nonumber\\
s.t. & A\begin{bmatrix}
x\\
z
\end{bmatrix} &=b\label{eq:LSPLP}\\
&\begin{bmatrix}
x\\
z
\end{bmatrix}  &\geq 0\nonumber\\
\text{where} & A_{ij} &=\begin{cases} 
1-t^{-w(j)}\minwalks_{i',i}(t) & j=(i,i')\in E_{\max}\\
-t^{-w(j)}\minwalks_{j_2,i}(t) & j=(j_1,j_2)\in E_{\max},j_1\neq i\\
1 & (j,i)=(z,\top)\\
0 & j=z, i\neq \top
\end{cases}\nonumber
\end{eqnarray}
$A$ has one row for every vertex $i\in \vm$, one column for every $j\in E_{\max}$ (related to variable $x_j$), and one column for $\top$ (related to the variable $z$). We assume $\top\in V_{\max}$.

The vector $b$ is is constant. Its exact value does not matter for our reduction, as long as its values are positive. In some cases, some values of $b$ could be zero, for example $b_{\top}=0$ would also work. But for the sake of simplicity we always assume $b>0$.

The LP has one equation for every Player 0 node $i$. In every equation, there is a positive term for each outgoing edge $(i,i')$, and a negative term for every other edge from which you can reach $i$ via Minimizer nodes. One may view the LP intuitively as a strange variant of the flow formulation of MDPs as in \cite{puterman1994markov}: interpreting our LP as a flow problem, every Maximizer node $i$ receives some flow, and produces some flow $b_i$, which is sent along the edge(s) that the Maximizer wants to use. When the flow passes an edge with weight $w(e)$, its amount is multiplied by $t^{-w(e)}$. When the flow enters a Minimizer node, however, instead of sending the flow along an edge of their choice, they copy the inflow to \emph{all} of its outgoing edges. This simulates taking the minimum: the reason this works is due to the fact that for every set of integers $S$, we have $t^{-\min(S)}\approx\sum_{s\in S}t^{-s}$ if $t$ is large enough. This method is explained in more detail in \cite{schewe_parity_2009}. The rest of this section is dedicated to formally proving that \eqref{eq:LSPLP} is an LP formulation of LSPs, with a connection between strategy improvement and the simplex algorithm.

We compare $\minwalks$ and $\walks$ by their value as $t\to \infty$, so $\minwalks_{ij}>\minwalks_{kl}$ if there is a $T$ such that $\minwalks_{ij}(t)-\minwalks_{kl}(t)>0$ for all $t>T$. This is in principle the same as comparing the sorted sequences of possible path lengths lexicographically: the shortest path is the most important, and if the shortest paths are equal, we compare the second shortest path, third shortest, etc. Alternatively, one could view the LP as a linear program over the field of Puiseux series, similar to \cite{allamigeon2015tropicalizing}. For our purposes, we will consider $t$ simply as a large enough constant.

The main result for this section is the following theorem, which relates the key concepts of strategy improvement to the key concepts of the simplex algorithm.

\begin{theorem}\label{thm:main}
Let $\sigma$ be an admissible strategy for a nondegenerate LSP problem, and let $\B$ be some basis of (\ref{eq:LSPLP}).  Then 
    \begin{enumerate}
        \item Suppose $\B$ is the related basis to an admissible Maximizer strategy $\sigma$, i.e. $z\in \B$ and for the other variables $x_e\in \B\Leftrightarrow e\in \sigma$. In this case, every column of $B$ corresponds to an edge, each with a unique tail. Thus we can index the columns of $B$ by the tails of the corresponding edges, index the column corresponding to $z$ by $\top$, and the rows of $B$ by the corresponding vertices. Then $B^{-1}_{i_1i_2}=\walks^{\sigma}_{i_2i_1}(t)$ for all $i_1,i_2\in V_{\max}$.
        \item The feasible region of the LP forms a simple polyhedron, and a basis $\B$ is feasible if and only if the basic variables consist of $z$ and of $x_j$ for all $j\in \sigma$, for some admissible player 0 strategy~$\sigma$.
    \item $H_{1,e}<0$ if and only if $e$ is an improving move for SI, and the optimum of the LP is attained at a vertex.
    \end{enumerate}
It follows that for every possible run of strategy improvement in the LSP, there is an equivalent run of the simplex algorithm in the LP (\ref{eq:LSPLP}), and vice versa.
\end{theorem}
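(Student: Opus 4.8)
The plan is to prove the three numbered items and then read off the final sentence as a corollary. For item~1 I would show directly that $B=(I-P)^{T}$, where $P$ is the matrix indexed by $V_{\max}$ whose entry $P_{i_1i_2}(t)$ is the $t$-weighted count of nontrivial $i_1,i_2$-walks in $G_\sigma$ all of whose interior vertices belong to Player~1, with the $\top$-row of $I-P$ set to $e_\top^{T}$ to match the special $z$-column. Unwinding the definitions: the first edge of such a walk is $(i_1,\sigma(i_1))$, contributing the factor $t^{-w(i_1,\sigma(i_1))}$, and the remainder $\sigma(i_1)\rightsquigarrow i_2$ is a walk in $(V,E_{\min})$ (Player~0 vertices are sinks there), hence counted by $\minwalks_{\sigma(i_1),i_2}$; this is exactly $-1$ times the corresponding entry of $A$ along $\sigma$, the diagonal ``$1$'' accounting for the $1-\dots$ case. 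Decomposing an arbitrary $i_2,i_1$-walk in $G_\sigma$ at its successive visits to Player~0 vertices then gives $\walks^\sigma=\sum_{n\ge 0}P^{n}=(I-P)^{-1}$ as formal series, which converge for $t$ large exactly as in Lemma~\ref{lem:convergingseries} (using that $G_\sigma$ has only positive cycles since $\sigma$ is admissible). Transposing yields $B^{-1}_{i_1i_2}=\walks^\sigma_{i_2i_1}(t)$.

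For the ``if'' direction of item~2, $B$ is invertible by item~1 and $B^{-1}b$ is strictly positive because $b>0$ and $\walks^\sigma_{ii}\ge 1$, so $\B(\sigma):=\{z\}\cup\{x_j:j\in\sigma\}$ is a nondegenerate basic feasible solution. For the converse I would note that each column $A_{\cdot,e}$ with $e=(v,v')$ is $e_v$ plus terms that, after factoring out $t^{-w(e)}$, are series in $t^{-1}$ without constant term; comparing leading exponents (equivalently working over Puiseux series, as in \cite{allamigeon2015tropicalizing}) shows a column set can be a basis only if the multiset of its tails, together with $\top$ for the $z$-column, is exactly $V_{\max}$ --- so $z\in\B$ (the column of $(\top,\top)$ is identically $0$) and $\B$ picks one out-edge per Player~0 vertex, defining a strategy $\sigma$; invertibility of $B=(I-P)^{T}$ then forces the Neumann series to converge, which holds iff $G_\sigma$ has no nonpositive cycle, i.e.\ iff $\sigma$ is admissible (the first nondegeneracy condition excludes zero cycles). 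Strict positivity from item~1 shows distinct bases give distinct vertices and every vertex is nondegenerate, so the polyhedron is simple. Item~3 is a tableau computation: since $c=e_z$, the reduced-cost row is $-c_B^{T}B^{-1}A$ and $c_B^{T}B^{-1}$ is the $\top$-row of $B^{-1}$, namely $(\walks^\sigma_{i\top}(t))_i$, whose $i$-th entry has leading exponent $-\val_\sigma(i)$; plugging in $A_{\cdot,e}$ and using $\sum_i\minwalks_{v',i}\walks^\sigma_{i\top}=\walks^\sigma_{v'\top}$ (again by walk decomposition), the reduced cost of $x_e$ collapses to $t^{-w(e)}\walks^\sigma_{v'\top}-\walks^\sigma_{v\top}$, with leading behaviour $t^{-(w(e)+\val_\sigma(v'))}-t^{-\val_\sigma(v)}$; this is negative for $t$ large iff $w(e)+\val_\sigma(v')>\val_\sigma(v)$, i.e.\ iff $e$ is improving for strategy improvement, the second nondegeneracy condition ruling out the tie. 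Finiteness of the LSP value together with simplicity gives that the optimum is attained at a vertex.

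Granting the three items, the final sentence follows. By item~2 the map $\sigma\mapsto\B(\sigma)$ is a bijection between admissible Player~0 strategies and vertices of \eqref{eq:LSPLP}, and by item~3 the improving pivot columns at $\B(\sigma)$ are precisely the improving switches at $\sigma$. A single pivot matches a single switch: entering $x_e$ with $e=(v,v')$ produces a feasible basis (no unboundedness, by item~3), hence $\B(\sigma')$ for some admissible $\sigma'$ by item~2; since $\B(\sigma')$ contains $z$ and one out-edge of every Player~0 vertex, and the only out-edge of $v$ it can contain is $e$, we get $\sigma'=\sigma$ except $\sigma'(v)=v'$, with leaving variable $x_{(v,\sigma(v))}$ --- exactly the effect of switching on $e$. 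Each step strictly improves the objective (equivalently, by nondegeneracy $z$ strictly decreases), so both procedures terminate, and an induction on the number of steps shows that, starting from the vertex $\B(\sigma_0)$ for an admissible $\sigma_0$ (which exists by the LSP conditions), any run of strategy improvement is reproduced step for step by the simplex algorithm under the matching pivot rule, ending at the same optimal strategy/vertex; the converse is symmetric.

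The step I expect to be the real obstacle is the ``only if'' direction of item~2 together with the simplicity claim: making the ``$t$ large'' reasoning rigorous --- that a column set is a basis only if its tails cover $V_{\max}$ once, and that invertibility of $B$ is equivalent to admissibility of $\sigma$ --- requires a careful leading-exponent/perturbation argument, or a genuine pass through the Puiseux-series model, and it is exactly here that both nondegeneracy hypotheses have to be used in the right places (for strictness/simplicity, and to exclude zero cycles). By comparison items~1 and~3 are essentially bookkeeping once the walk-decomposition identities are set up, and the final assembly is routine given the strategy--basis dictionary.
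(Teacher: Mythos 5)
Your items 1 and 3, the ``if'' half of item 2, and the final strategy--basis dictionary all follow essentially the paper's route (Neumann series of the walk matrix, the reduced-cost identity $H_{1,e}=t^{-w(e)}\walks^{\sigma}_{j_2\top}(t)-\walks^{\sigma}_{j_1\top}(t)$, and the one-pivot-equals-one-switch argument). The genuine gap is exactly where you predicted it, in the ``only if'' half of item 2, and the sketch you offer there rests on two false implications. First, linear independence alone does \emph{not} force the column tails to cover $\vm$ exactly once. Concretely: take $\vm=\{u,v,\top\}$, one Minimizer node $m$ with edges $m\to\top$ (weight $0$) and $m\to u$ (weight $3$), and Maximizer edges $u\to m$ (weight $1$), $v\to m$ (weight $2$), $v\to\top$ (weight $5$). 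The columns of $(v,m)$, $(v,\top)$ and $z$ have determinant $-t^{-5}\neq 0$, so they form a basis although $u$ has no out-edge in it; the $u$-row of such a column set is not zero (Minimizer walks reach $u$), it is merely nonpositive, so what fails is \emph{feasibility} ($-t^{-5}x_{(v,m)}=b_u>0$ forces a negative basic variable), not independence. Second, invertibility of $B=I-\tilde B$ does not force the Neumann series $\sum_k\tilde B^k$ to converge (already $1-2$ is invertible in the scalar case), so you cannot deduce admissibility of $\sigma$ from $\B$ being a basis: a non-admissible strategy generically still yields an invertible, merely infeasible, basis. In both places the argument must use feasibility, i.e.\ $b>0$ together with $x_B\geq 0$: if some $i\in\vm$ has no out-edge column in $\B$, every entry of the $i$-row of $B$ is nonpositive, contradicting $b_i>0$; and if $G_\sigma$ contains a negative cycle $C$ with Maximizer vertices $v_1,\ldots,v_k$, chaining the row equations of $Bx_B=b$ along the cycle gives $x_{v_1}>t^{-w(C)}x_{v_1}>x_{v_1}$, a contradiction. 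This is the paper's argument, and your leading-exponent/Puiseux reasoning as stated does not substitute for it (though once this is fixed, your strict-positivity observation does give simplicity).

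A smaller gloss in item 3: invoking ``the second nondegeneracy condition'' to rule out a tie is only legitimate when $e,P_2$ is itself a path and its union with $P_1$ contains no cycle. You must treat separately the case where $P_2$ passes through $j_1$ (then $e,P_2$ contains a cycle and the \emph{first} condition, no zero-weight cycles, is what excludes the tie) and verify, via the shortest-path distance argument, that $P_1\cup(e,P_2)$ is acyclic before the second condition applies. These are exactly the two sub-cases handled in the paper's proof.
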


\begin{proof}
 \emph{Item 1} Let $\tilde{B}=I-B$. We can index the rows and columns of $B$ as indicated in the theorem, simply because a Maximizer strategy by definition has exactly one outgoing edge for each vertex in $\vm$. Assume some order of $\vm$. Suppose the columns of $B$ are arranged in order of the tails of the corresponding edges, and the equations are arranged by their corresponding vertex. (and the column corresponding to $z$ is last, and the row corresponding to $\top$ is the bottom one). By definition of $B$ we get
\begin{equation*}
\tilde{B}_{ij} =\begin{cases} 
0 & j=z\\
t^{-w(j, \sigma(j))}\minwalks_{\sigma(j),i}(t) & j\neq z
\end{cases}
\end{equation*}
Now we want to show that the inverse of $B$ is given by the infinite series $I+\tilde{B}+\tilde{B}^2+\ldots$. Let $\kw{k}^{\sigma}_{ij}$ be a polynomial defined by $\sum_{m\in \mathbb{Z}}a_{m}^{\sigma,k}(i,j)t^{-m}$, where $a_{m}^{\sigma,k}(i,j)$ is the number of finite $i,j$-walks in $G_{\sigma}$ of weight exactly $m$, with the restriction that exactly $k$ nodes (not counting the last node) on the walk are in $\vm$ (we may count the same node multiple times here). From the definition we get $\kw{0}^{\sigma}_{ji}(t)=\minwalks_{ji}(t)$. If $j\in \vm$, then every walk in $\kw{1}^{\sigma}_{ji}(t)$ starts with edge $(j, \sigma(j))$, hence $\kw{1}^{\sigma}_{ji}(t)=t^{-w(j,\sigma(j))}\minwalks_{\sigma(j)i}(t)=\tilde{B}_{ij}$. No we prove by induction that 
\begin{equation}\label{eq:Btildek}
\tilde{B}_{ij}^k = 
\begin{cases} 
0 & j=z\\
\kw{k}^{\sigma}_{ji}(t) & \text{otherwise}\\
\end{cases}
\end{equation}
The induction basis follows from our observation about $\kw{1}$. For the induction step, the key insight is that every path with $k$ nodes of $\vm$ consists of one path with one $\vm$-node and one path with $k-1$ nodes from $\vm$. For the induction step, we assume that (\ref{eq:Btildek}) holds for $k-1$, and then we get (if $j\neq z$):
\begin{eqnarray*}
\tilde{B}^k_{ij} &=& \sum_{v\in \vm\cup \{\top\}}\tilde{B}_{iv}\tilde{B}^{k-1}_{vj}\\
&=&\sum_{v\in \vm\cup \{\top\}}t^{-w(v,\sigma(v))}\minwalks_{\sigma(v)i}(t) \times (k-1)\operatorname{-walks}^{\sigma}_{jv}(t)= k\operatorname{-walks}^{\sigma}_{ji}(t)
\end{eqnarray*}
and if $j=z$ we get likewise $\tilde{B}^k_{ij}=0$ from the induction hypothesis. This completes the induction proof of (\ref{eq:Btildek}). From this it immediately follows that 
\[
(I+\tilde{B}+\tilde{B}^2+\ldots)_{ij} = \walks^{\sigma}_{ji}(t)  \forall i,j \in \vm\cup\{\top\}
\]
also using that $\walks^{\sigma}_{\top\top}(t)=1$ holds by definition. Since we know that $\walks_{ji}^{\sigma}(t)$ converges if $\sigma$ is admissible and $t$ is large enough, we conclude that $I+\tilde{B}+\tilde{B}^2+\ldots$ converges elementwise and is therefore well-defined. It also follows that $\tilde{B}^k\to 0$ elementwise as $k\to \infty$. Working out the brackets of $(I-\tilde{B})(I+\tilde{B}+\tilde{B}^2+\ldots)$ then gives a telescoping sum that is equal to $I$. Recalling that $I-\tilde{B}=B$, we conclude $B^{-1}=I+\tilde{B}+\tilde{B}^2+\ldots$, so $B^{-1}_{i_1i_2}=\walks^{\sigma}_{i_2i_1}(t)$ for all $i_1,i_2\in \vm\cup\{\top\}$.

\emph{item 2} First of all, note that the variable $z$ must be in any feasible basis because $z$ is the only variable with positive coefficient in the $\top$-row of the LP (\ref{eq:LSPLP}). We treat all the possible options case by case. 
\begin{itemize}
    \item Suppose $B$ does not correspond to a proper strategy. i.e. there is a vertex $i\in \vm$ such that there is no column in $B$ that corresponds to an outgoing edge of $i$ . In particular, this means that all entries in the $i$-th row of $B$ are 0 or of the form $-t^{-w(j_1, j_2)}\minwalks_{j_2,i}(t)$, in particular they are all nonpositive. Looking at the right side of the $i$-th equation, we have $b_i >0$ by assumption. That is not possible, since then $i$-th equality then dictates that at least one element of $x_B$ must be negative, so $x$ is not feasible. We conclude that any feasible basis must correspond to a Maximizer strategy.
    \item Suppose $\B$ corresponds to a non-admissible strategy $\sigma$, and suppose that $\B$ is feasible. This means that there is a cycle $C$ in $G_{\sigma}$ with negative weight. Let $v_1,v_2, \ldots, v_k$ be the vertices in $\vm\cap C$ (in that order on the cycle). For simplicity of notation we say $v_{k+1}:=v_1$. We divide the cycle into segments $P_1, P_2,\ldots, P_k$, where every segment $P_i$ starts with $v_i$ and ends at $v_{i+1}$. By definition of $\minwalks$, we have
\[
t^{-w(P_i)}\leq t^{-w(v_i, \sigma(v_i))}\minwalks_{\sigma(v_i),v_{i+1}} \; i=1, \ldots , k
\]
Now consider the $v_i$-row in the system of equations $Bx_{B}=b$. Since $b_i>0$, and since the element on the diagonal of $B$ is the only positive element of the $v_i$-row, we must have $(x_B)_{(v_i,\sigma(v_i))}>0$ if we want $x_B\geq 0$. This in turn implies that the polyhedron is simple, since we cannot pick another basis for the same basic solution. In particular, we get the following inequality, where the equality $*$ is equivalent to the equation in the $v_i$-row of $Bx_{B}=b$:
\begin{eqnarray*}
x_{v_i} &\stackrel{*}{=}& b_i+\sum_{j\in \vm}t^{-w(j, \sigma(j))}\minwalks_{\sigma(j), v_i}(t) x_{j}\\
&>& t^{-w(v_{i-1}, \sigma(v_{i-1}))}\minwalks_{\sigma(v_{i-1}), v_i}(t)x_{v_{i-1}} \\
&\geq& t^{-w(P_{i-1})}x_{v_{i-1}} \;\; \text{for } i=2,3,\ldots, k+1
\end{eqnarray*}
Applying these inequalities for $i=2,3,\ldots, k+1$ simultaneously gives
\[
x_{v_1} > t^{-\sum_{i=1}^kw(v_i, \sigma(v_i))}x_{v_1}=t^{-w(C)}x_{v_1}>x_{v_1}
\]
as $w(C)<0$ and $x_{v_1}>0$. But that is not possible, so we conclude that a feasible basis must come from an admissible maximizer strategy. 
\item Finally, suppose we have a basis coming from an admissible strategy. From item 1 we know the values in $B^{-1}$. In particular, all elements of $B^{-1}$ are nonnegative. We have $x_B=B^{-1}b$, where clearly $x_B\geq 0$, so the basis is feasible.\footnote{The difference with a non-admissible strategy here lies in the fact that for a non-admissible strategy the value of the polynomial $\walks^{\sigma}_{ij}(t)$ is not defined: the infinite series does not converge. It also follows that then the series $I+\tilde{B}+\tilde{B}^2+\ldots$ does not converge, so it will not be equal to $B^{-1}$.}
\end{itemize}

\emph{item 3} Let $e=(j_1,j_2)$ be an edge that is not in the basis $\B$ as an index. We first compute the reduced cost $H_{1,e}$, which is defined as $H_{1,e}=c_e-c_{B}^TB^{-1}A_e$. Here $A_e$ is the $e$-column of $A$. Note that $c_e=0$ and $c_B^T=(0,0,\ldots, 0, 1)$, since only $z$ occurs in the objective. Filling this into the definition, we get 
\begin{eqnarray}
 H_{1,e}&=&-\sum_{v\in \vm\cup\{\top\}}B^{-1}_{\top v}(A_e)_v\nonumber\\
 &=&-B^{-1}_{\top z}A_{\top e}-B^{-1}_{\top j_1}A_{j_1e}-\sum_{v\in \vm\backslash \{j_1\}} B^{-1}_{\top v}A_{ve} \nonumber \\
 &=& 1\cdot t^{-w(e)}\minwalks_{j_2,\top}(t) -\walks^{\sigma}_{j_1\top}(t)\cdot (1-t^{-w(e)}\minwalks_{j_2,j_1}(t))\nonumber\\
 & &- \sum_{v\in \vm\backslash\{j_1\}} \walks^{\sigma}_{v\top}(t)\cdot -t^{-w(e)}\minwalks_{j_2,v}(t)\nonumber\\
 &=& t^{-w(e)}\left(\minwalks_{j_2,\top}(t)+\sum_{v\in \vm} \walks^{\sigma}_{v\top}(t)\cdot\minwalks_{j_2,v}(t)\right) -\walks^{\sigma}_{j_1\top}(t)\label{eq:reducedcost}
 \end{eqnarray}

The last equality (\ref{eq:reducedcost}) is just the result of reordering the terms. Now we want to rewrite the last line. To do so, we look at $\walks^{\sigma}_{j_2,\top}(t)$. Any path that leads from $j_2$ to $\top$ falls in one of two categories:
\begin{itemize}
    \item It goes immediately to $\top$ without passing any maximizer nodes (this also means $j_2\in V_{Min}$).
    \item It passes (or starts at) a node from $V_{Max}$.
\end{itemize}
This implies that
\begin{equation}\label{eq:rewritewalks}
\walks^{\sigma}_{j_2\top}(t)= \minwalks_{j_2,\top}(t) +\sum_{v\in \vm} \walks^{\sigma}_{v\top}(t)\minwalks_{j_2,v}(t) 
\end{equation}
Now if we substitute this in (\ref{eq:reducedcost}), we get
 
 \begin{eqnarray}
 H_{1,e} &=& t^{-w(e)}\walks^{\sigma}_{j_2\top}(t) - \walks^{\sigma}_{j_1\top}(t)\label{eq:reducedcost2}
\end{eqnarray}
Now we want to prove that $H_{1,e}<0$ if and only if $e$ is an improving move for strategy improvement. First of all, suppose $H_{1,e}<0$. The value of the LP is bounded from below (as $z\geq 0$), hence we can pivot on $e$, yielding a new feasible basis. We know that feasible bases correspond to proper, admissible strategies. Hence the next basis has to be the strategy $\sigma[e]$, defined by
\[
\sigma[e](v)=\begin{cases}
    \sigma(v) & v\neq j_1\\
    j_2 & v=j_1
\end{cases}
\]
That is since only $(v,\sigma(v))$ is allowed to leave the basis to keep it feasible (the new basis must correspond to a proper strategy) The strategy $\sigma[e]$ must also be admissible, as it corresponds to a feasible basis. Therefore there are no negative weight cycles in $G_{\sigma[e]}$.

From equation (\ref{eq:reducedcost2}) we know that $H_{1,e}<0$ is equivalent to saying $t^{-w(e)}\walks^{\sigma}_{j_2\top}(t) < \walks^{\sigma}_{j_1\top}(t)$. Let $P_1$ and $P_2$ be the shortest paths in $G_{\sigma}$ from respectively $j_1$ to $\top$ and from $j_2$ to $\top$. Since $\sigma$ is admissible, we know that shortest paths exist. The leading term of $t^{-w(e)}\walks^{\sigma}_{j_2\top}(t)$ has as exponent $-w(e)-w(P_2)$. We claim that these two exponents cannot be the same. 

If $P_2$ passes through $j_1$, then $e,P_2$ is a walk that consists of a cycle (which we call $C$) combined with $P_1$. But then $-w(e)-w(P_2)=-w(P_1)$ implies $w(C)=0$, which contradicts the definition of nondegenerate LSPs. On the other hand, if $P_2$ does not pass through $j_1$, then $e,P_2$ is a path in $G$ that is distinct from $P_1$. In that case, we show $P_1$ and $P_2$ cannot form a cycle:

Suppose for contradiction that $C$ is a cycle contained in the union of $P_1$ and $P_2$. Then $C$ contains some edge $(v_1,v_2)\in P_1$. Since this edge is part of some shortest path from $v_1$ to $\top$ implies that $\operatorname{distance}(v_1,\top)=\operatorname{distance}(v_2,\top)+w(v_1,v_2)$ (with distances in $G_{\sigma}$). We can repeat this argument for all edges of $C$, and adding all the equations together gives us $\operatorname{distance}(v_1,\top)=\operatorname{distance}(v_1,\top)+w(C)$. However, by nondegeneracy there are no cycles of weight 0, so this is not possible.

Adding $e$ to $P_2$ will not create a new cycle in the union of $P_1$ and $P_2$. Hence we can use the nondegeneracy to conclude that $e,P_2$ and $P_1$ cannot have the same weight if they are both paths. Hence the two exponents that we mentioned must always be different. In particular, this means that $H_{1,e}<0$ if and only if the exponent of the leading term of $\walks^{\sigma}_{j_1\top}(t)$ is larger than the other term, if and only if $w(P_2)+w(e)>w(P_1)$. This last statement in turn can be written in terms of the valuation, in which case it writes $\val_{\sigma}(j_2)+w(e) > \val_{\sigma}(j_1)$, which is the same as saying $e$ is an improving edge in strategy improvement. So indeed, $H_{1,e}<0$ if and only if $e$ is an improving move for $\sigma$ in strategy improvement. 

Because of the correctness of the strategy improvement algorithm, we know that there exists a unique admissible strategy $\sigma^*$ that is optimal. From this strategy, there are no improving moves possible. This in turn means that in the basis $\mathcal{B}$ corresponding to $\sigma^*$, there are no improving indices, since all reduced costs are nonnegative. As a result, by correctness of the simplex algorithm, the basis $\mathcal{B}$ is optimal for the LP, and its related BFS is an optimal solution for the LP \eqref{eq:LSPLP}.
\end{proof}

In conclusion, \cref{thm:main} tells us that there is an LP formulation of any nondegenerate LSP problem, in which the simplex algorithm behaves like strategy improvement. Of course, if we want to analyze the behavior of a certain pivot rule for the simplex algorithm, we would need to find an equivalent improvement rule for strategy improvement.

For this, we can find some interpretations of variables from the simplex tableau. This is done by carefully looking at the proof of \cref{thm:main}, and is summarized in the following theorem.
\begin{theorem}\label{thm:interpretation} Consider the same conditions as in \cref{thm:main}. The following hold:
\begin{enumerate}
\setcounter{enumi}{3}
    \item If $e=(j_1,j_2)\in \sigma$, $B$ is the related basis to $\sigma$, and $x$ the basic solution of $B$, then 
    \[
        x_e=\sum_{v\in \vm}b_v\walks^{\sigma}_{v,j_1}(t) \text{  and  } z=\sum_{v\in \vm\cup\{\top\}}b_v\walks^{\sigma}_{v,\top}(t)
    \]
    \item If $(j_1,j_2)\notin B$ then for all $i\in \vm\cup\{\top\}$ we have
    \[H_{i,(j_1,j_2)}=t^{-w(j_1,j_2)}\walks^{\sigma}_{j_2i}(t) - \walks^{\sigma}_{j_1i}(t)\]
\end{enumerate}
\end{theorem}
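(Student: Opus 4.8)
The plan is to derive both items by revisiting the proof of \cref{thm:main} and leaving free an index that was specialized there; almost no new work is needed.

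\emph{Item~4.} I would start from the identity $x_B=B^{-1}b$ for the basic solution of the (feasible, by \cref{thm:main}) basis $\B$, and substitute the closed form $B^{-1}_{i_1i_2}=\walks^{\sigma}_{i_2i_1}(t)$ from the first part of \cref{thm:main}. Since the column of $B$ carrying $x_e$ is indexed by the tail $j_1$ of $e$ and the column carrying $z$ is indexed by $\top$, reading off the two relevant components of $B^{-1}b$ gives $x_e=\sum_{v}B^{-1}_{j_1v}b_v=\sum_{v\in\vm}b_v\walks^{\sigma}_{v,j_1}(t)$ and $z=\sum_{v}B^{-1}_{\top v}b_v=\sum_{v\in\vm\cup\{\top\}}b_v\walks^{\sigma}_{v,\top}(t)$. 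This amounts to lining up the row and column labels of $B$ with the statement.

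\emph{Item~5.} I would compute a generic tableau entry directly from the definition. For a row $i$ below the objective row, $H_{i,e}=-(B^{-1}A_e)_i=-\sum_v B^{-1}_{iv}(A_e)_v$; and since $c_B$ picks out the column of $B$ carrying $z$, whose index is $\top$, and $c_e=0$, the reduced-cost entry $H_{1,e}$ is exactly the $i=\top$ instance of this expression. Substituting the closed form for $B^{-1}$ together with the explicit entries of the column $A_e$, read off from \eqref{eq:LSPLP} as $(A_e)_{j_1}=1-t^{-w(e)}\minwalks_{j_2,j_1}(t)$ and $(A_e)_v=-t^{-w(e)}\minwalks_{j_2,v}(t)$ for $v\neq j_1$, and then collecting terms exactly as in the derivation of \eqref{eq:reducedcost}, gives $H_{i,e}=t^{-w(e)}\bigl(\minwalks_{j_2,i}(t)+\sum_{v\in\vm}\walks^{\sigma}_{vi}(t)\minwalks_{j_2,v}(t)\bigr)-\walks^{\sigma}_{j_1i}(t)$, which is \eqref{eq:reducedcost} with $\top$ replaced by $i$ throughout.

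The one new ingredient is to generalize the walk-decomposition identity \eqref{eq:rewritewalks}, proved there only for target $\top$, to an arbitrary target $i\in\vm\cup\{\top\}$: each $j_2,i$-walk in $G_{\sigma}$ either avoids every Maximizer vertex, and is then counted by $\minwalks_{j_2,i}$, or it factors uniquely at its first Maximizer vertex $v$ into a $\minwalks_{j_2,v}$-walk followed by a $\walks^{\sigma}_{vi}$-walk, so that $\walks^{\sigma}_{j_2i}(t)=\minwalks_{j_2,i}(t)+\sum_{v\in\vm}\walks^{\sigma}_{vi}(t)\minwalks_{j_2,v}(t)$. Substituting this collapses the bracket to $\walks^{\sigma}_{j_2i}(t)$ and yields $H_{i,e}=t^{-w(e)}\walks^{\sigma}_{j_2i}(t)-\walks^{\sigma}_{j_1i}(t)$, recovering \eqref{eq:reducedcost2} when $i=\top$. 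I do not expect a real obstacle here, since the substance is already present in the proof of \cref{thm:main}; the only points requiring care are the index bookkeeping — which vertex labels which row and column of $B$, and where the variable $z$ sits — and checking that the generalized walk decomposition partitions the walks correctly in the boundary cases, such as $j_2$ being a Minimizer vertex that reaches $i$ without meeting a Maximizer vertex, $j_2$ itself being a Maximizer vertex, or $j_2=i$.
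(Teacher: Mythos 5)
Your route is the paper's: item 4 is read off from $x_B=B^{-1}b$ together with the formula $B^{-1}_{i_1i_2}=\walks^{\sigma}_{i_2i_1}(t)$ from item 1, and item 5 is obtained by expanding $H_{i,e}=-(B^{-1}A_e)_i$ entrywise and collapsing the sum with a walk-decomposition identity, just as in the proof of \cref{thm:main}; the final formulas you state are correct, and your treatment of item 4 and of the row $i=\top$ is fine.

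However, for rows $i\in\vm$ with $i\neq\top$ two of your intermediate claims are false as stated, and you only land on the right answer because the two errors cancel. First, the expansion is \emph{not} ``\eqref{eq:reducedcost} with $\top$ replaced by $i$ throughout'': the standalone term $t^{-w(e)}\minwalks_{j_2,\top}(t)$ in \eqref{eq:reducedcost} comes from the $\top$-row of $A_e$ multiplied by $B^{-1}_{\top\top}=\walks^{\sigma}_{\top\top}(t)=1$, and for a row $i\neq\top$ the corresponding coefficient is $B^{-1}_{i\top}=\walks^{\sigma}_{\top i}(t)=0$ (no walk leaves $\top$), so that contribution vanishes and no term $t^{-w(e)}\minwalks_{j_2,i}(t)$ arises anywhere; the correct expansion is $H_{i,e}=t^{-w(e)}\sum_{v\in\vm}\walks^{\sigma}_{vi}(t)\minwalks_{j_2,v}(t)-\walks^{\sigma}_{j_1i}(t)$, which is what the paper writes. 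Second, and correspondingly, your generalized identity $\walks^{\sigma}_{j_2i}=\minwalks_{j_2,i}+\sum_{v\in\vm}\walks^{\sigma}_{vi}\minwalks_{j_2,v}$ overcounts when $i\in\vm$: a $j_2,i$-walk using only Minimizer edges is counted once by the standalone $\minwalks_{j_2,i}$ and again by the $v=i$ summand with trivial suffix (recall $\walks^{\sigma}_{ii}(t)\geq 1$), and $\minwalks_{j_2,i}$ is in general nonzero (it is at least $1$ when $j_2=i$) -- exactly the boundary cases you flagged but did not resolve. The correct statement for $i\in\vm$, $i\neq\top$, is $\walks^{\sigma}_{j_2i}(t)=\sum_{v\in\vm}\minwalks_{j_2,v}(t)\walks^{\sigma}_{vi}(t)$ with no extra term, obtained by splitting each walk at the \emph{first} Maximizer vertex it meets (which exists because the endpoint $i$ is one); the standalone $\minwalks$ term belongs only to the target $\top$ as in \eqref{eq:rewritewalks}, where it plays the role of the $v=\top$ part of this decomposition. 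With these two corrections your argument coincides with the paper's proof.
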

\begin{proof}
    \textit{Item 4} This follows immediately from the equation $x_B=B^{-1}b$ and the values of $B_{ij}^{-1}$ we found in the proof of item 1.

\textit{Item 5} Let $e=(j_1,j_2)$. We have by definition $H_{i, (j_1,j_2)}=(-B^{-1}A)_{i,(j_1,j_2)}$. We first look at the case where $i=\top$. If $e_i$ denotes the appropriate $i$-unit vector, we have
\[
    H_{\top, (j_1,j_2)}:=(-B^{-1}A)_{\top,(j_1,j_2)}=-e_{\top}^TB^{-1}Ae_{(j_1,j_2)}=c_{(j_1,j_2)}-c_B^TB^{-1}Ae_{(j_1,j_2)}:=H_{1,(j_1,j_2)}
\]
using that $c_B=e_{\top}$ and $c_{(j_1,j_2)}=0$. The latter term we already computed in the proof of item 3. We have
\[
    H_{\top, (j_1,j_2)}=H_{1,(j_1,j_2)}=t^{-w(j_1,j_2)}\walks^{\sigma}_{j_2\top}(t) -\walks^{\sigma}_{j_1\top}(t)
\]
from (\ref{eq:reducedcost2}), and this is exactly what we needed to prove. Now assume $i\neq \top$. Then
\begin{eqnarray*}
    H_{i, (j_1,j_2)} &=& \sum_{v\in \vm\cup \{\top\}} -B^{-1}_{iv}A_{v(j_1,j_2)}\\
    &=& -B^{-1}_{i\top}A_{\top(j_1,j_2)}-B^{-1}_{ij_1}A_{j_1(j_1,j_2)}-\sum_{v\in \vm\backslash \{j_1\}} B^{-1}_{iv}A_{v(j_1,j_2)} \\
    &=& 0 \cdot A_{\top(j_1,j_2)} - \walks_{j_1i}^{\sigma}(t)\cdot (1-t^{-w(e)}\minwalks_{j_2,j_1}(t))\\
    & & - \sum_{v\in \vm\backslash \{j_1\}} \walks_{v,i}^{\sigma}(t)\cdot - t^{-w(e)}\minwalks_{j_2,v}(t)\\
    &=& t^{-w(e)}\left(\sum_{v\in \vm} \walks^{\sigma}_{v,i}(t)\cdot\minwalks_{j_2,v}(t)\right) -\walks^{\sigma}_{j_1,i}(t) \\
    &\stackrel{*}{=}& t^{-w(e)}\walks^{\sigma}_{j_2,i}(t)-\walks^{\sigma}_{j_1,i}(t)
\end{eqnarray*}
Here the equation marked with a star follows from the fact that every walk from $j_2$ to $i$ passes through or ends at some node of $\vm$: hence the sum between the brackets basically sums over the nodes $v\in \vm$ that the walk passes first, and the sum equals $\walks^{\sigma}_{j_2,i}(t)$. This is similar in idea to equation (\ref{eq:rewritewalks}).
\end{proof}
It turns out that there is also a nice interpretation of the variable values for BFSs. But this depends on the choice of $b$. We may obviously pick $b_i=1$ for all $i$, giving us \break$x_{(j_1,j_2)}=\sum_{v\in \vm}\walks^{\sigma}_{v,j_1}(t)$ for all $(j_1,j_2)\in \sigma$, and $z=\sum_{v\in \vm\cup \{\top\}}\walks^{\sigma}_{v,\top}(t)$. Another interesting choice could be $b_i = \sum_{v'\in V} \minwalks_{v',i}(t)$, which yields
\begin{eqnarray*}
    x_{(j_1,j_2)}&=&\sum_{v\in \vm}\walks^{\sigma}_{v,j_1}(t)\sum_{v'\in V}\minwalks_{v',v}(t)\\
    &=&\sum_{v'\in V}\sum_{v\in \vm}\walks^{\sigma}_{v,j_1}(t)\minwalks_{v',v}(t)=\sum_{v'\in V}\walks^{\sigma}_{v',j_1}(t)
\end{eqnarray*}
and likewise $z=\sum_{v'\in V}\walks^{\sigma}_{v',\top}(t)$.

\subsection{Handling degeneracy}
In \cref{thm:main} we assume that the LSP is nondegenerate. While in general, LSPs and related games do not fulfill this property, they can be easily modified to prevent degeneracy in the following ways:
\begin{itemize}
    \item For a (sink) parity game, perform preprocessing to give each vertex a unique priority: if there are two identical priorities $p$, one may simply fix one of these two, and then increase all other priorities in the game that are $\geq p$ by $2$. Doing this repeatedly gives an equivalent (sink) parity game.\footnote{For sink parity games, there could have been a tie between strategies, but all ties get broken with this preprocessing.} As a result, two paths $P_1,P_2$ from a node to the sink can only have equal weight if they contain the exact same nodes. However, in that case, the order of nodes in both paths must be different, so there is some $v,v'$-path in $P_1$ where there is a $v',v$-path in $P_2$. That means the union of $P_1$ and $P_2$ forms a cycle, so this is not relevant for nondegeneracy, In all other cases, the weights of $P_1$ and $P_2$ are distinct, and therefore the sink parity game induces a nondegenerate LSP.
    \item For a mean payoff game or LSP, multiply all weights by $2^{|V|}$. Label the vertices $v_1,v_2,\ldots, v_{|V|}$, and add a weight of $2^{i-1}$ to the outgoing edges of node $v_i$ (except $\top$) for all $i$. This does not change the sign of any cycle, except that cycles that had weight 0 now have a relatively small positive weight. This also makes sure that every path from a node to the sink has a unique weight, unless both paths contain the exact same nodes. But then there will again be a cycle in the union of the two paths, so we conclude that this modification is enough to guarantee a nondegenerate LSP.
\end{itemize}

In the proof of \cref{thm:main}, we only used the fact that the LSP is nondegenerate in the proof of item 3. Without this assumption, it could be that two bases have different objective values in the LP, while their related strategies have the same valuation. This can be because the weight of the second/third/... shortest paths to the sink can be different; this has no effect on strategy improvement, but does impact the LP.

One could think of the simplex algorithm in that case as emulating a variant of strategy improvement, where one considers a move improving if it lexicographically improves some sorted list of shortest paths to the sink. For example, a move that keeps the shortest path the same but improves the second shortest path would be improving. 

In theory, the steps the simplex algorithm takes could still be computed, by expressing $\minwalks$ and $\walks$ as fractions of polynomials in $t$, since they are solutions to systems of linear equations in terms of finite polynomials. It is unclear, however, if this can be done efficiently: since there are exponentially many paths in a graph in general, these polynomials can contain exponentially many terms.

\section{Strategies and lopsided sets}
In this section, we use the previously established connection between strategies in games on graphs and linear programs. In particular, we show some properties of the set of admissible or winning strategies. First of all, it turns out that the polyhedron of the LP has a nice structure, which we define first.

\begin{definition}
    Suppose we are given a $d$-dimensional, simple, polyhedron $P$. We call such a polyhedron \emph{hypercubelike} if its facets can be partitioned into $d$ sets $S_1,S_2,\ldots,S_d$, such that:
\begin{itemize}
    \item $|S_i|\in\{1,2\}$ for every $i$
    \item If $|S_i|=2$, then the two facets of $S_i$ have an empty intersection.
\end{itemize}
\end{definition}
In other words, a hypercubelike polyhedron looks like some part of a hypercube: its face lattice is a sublattice of that of a hypercube. We want to analyze the structure of vertices of hypercubelike polyhedra.

\begin{definition}
    Given a hypercubelike polyhedron $P$, let $S_1,S_2,\ldots,S_d$ be the partition of its facets. For every $i$ with $|S_i|=2$, we say $S_i=\{S_i^1,S_i^2\}$. Let $V(P)$ be the set of vertices of $P$, and let $b:V(P)\to \{-1,1\}^d$ be such that:
    \begin{itemize}
        \item We have $b(v)_i=b(v')_i$ for all $v,v'\in V(P)$ and all $i$ with $|S_i|=1$.
        \item We have $b(v)=-1$ if $v\in S_i^1$, and $b(v)=1$ if $v\in S_i^2$, for all $i$ with $|S_i|=2$.
    \end{itemize}
     Let $b(P)$ be defined by $\SetOf{b(v)}{v\text{ vertex of }P}$.
\end{definition}
Since hypercubelike polyhedra are simple by definition, the inclusion structure of their faces (face lattice) is completely determined by the set $b(P)$ \cite[Thm. 3.4]{joswig2001vertex}.
We call a subset $C$ of $\{-1,1\}^d$ \emph{hypercube-realizable} if there exists a hypercubelike polyhedron $P$ such that $C=b(P)$. Not every subset is hypercube-realizable, as shown in the following lemma. This particular set will be used later.

\begin{lemma}\label{lem:disconnected}
    Given $d\geq 2$, the set $C=\{-1,1\}^d\backslash\{(-1,-1,\ldots,-1),(1,1,\ldots,1)\}$ is not hypercube-realizable.
\end{lemma}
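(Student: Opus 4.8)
The plan is to argue by contradiction: suppose there is a hypercubelike polyhedron $P$ with $b(P)=C$, where $C=\{-1,1\}^d\setminus\{(-1,\ldots,-1),(1,\ldots,1)\}$. Since $P$ is simple and $d$-dimensional, each vertex lies on exactly $d$ facets, and by the definition of $b$ each vertex $v$ corresponds to a sign vector recording, for each coordinate $i$ with $|S_i|=2$, which of the two facets $S_i^1,S_i^2$ contains $v$. The first thing to pin down is that every $i$ must have $|S_i|=2$: if some $|S_i|=1$ then $b(v)_i$ is constant over all vertices, so $b(P)$ would miss at least half of $\{-1,1\}^d$, far fewer than the $2^d-2$ vectors in $C$. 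So all $d$ pairs are genuine pairs of disjoint facets.

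Next I would exploit the face-lattice correspondence \cite[Thm. 3.4]{joswig2001vertex}: faces of $P$ correspond to sign vectors in $\{-1,0,1\}^d$ whose $\{-1,1\}$-completions all lie in $b(P)$, with the empty face (or rather, each face) being the common intersection of the facets it is "on". Concretely, an edge of $P$ corresponds to a pair of vertices of $C$ differing in exactly one coordinate (the unique non-facet-coordinate along that edge), and a $2$-face corresponds to a "square" — four vertices of $C$ agreeing outside two coordinates. The key structural fact I want is: the graph on $C$ whose edges join sign vectors at Hamming distance $1$ must be exactly the vertex-edge graph of $P$, hence connected, and moreover $P$ being hypercubelike forces this subgraph of the hypercube graph $Q_d$ to be an \emph{induced} subgraph that is the $1$-skeleton of a polytope-like cell complex.

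Now comes the main obstacle and the heart of the argument: showing that $C$ with this induced structure cannot be realized. I would look at a vertex $v=(1,-1,-1,\ldots,-1)$ of $C$ (it is in $C$ since it is neither all-$1$ nor all-$-1$). In $Q_d$ its neighbors are $(-1,-1,\ldots,-1)$ — which is \emph{not} in $C$ — and the $d-1$ vectors obtained by flipping one of the remaining coordinates, all of which \emph{are} in $C$. So in the induced graph on $C$, the vertex $v$ has degree exactly $d-1$, not $d$. But in a simple $d$-polytope (or hypercubelike polyhedron) every vertex must have degree exactly $d$. This is the contradiction. I should double-check the one subtlety: a hypercubelike polyhedron is a (possibly unbounded) simple polyhedron, so a vertex could in principle have degree $d$ with some edges being unbounded rays, and I must make sure the $b$-labelling still forces degree-$d$ in the combinatorial sense captured by $b(P)$; this is exactly where \cite[Thm. 3.4]{joswig2001vertex} is used — the face lattice, including which vertices are joined by edges, is determined by $b(P)$, and for the face lattice to be that of a simple $d$-polyhedron each vertex needs $d$ covering faces (edges). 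Since flipping coordinate $1$ at $v$ leaves $C$, that potential edge-neighbor is absent, and no other vertex of $Q_d$ is adjacent to $v$, so $v$ has only $d-1$ neighbors in $b(P)$ — impossible. I would present this cleanly by first establishing the "all pairs have size $2$" reduction, then stating the degree-$d$ requirement as a consequence of simplicity plus the Joswig characterization, then exhibiting the deficient vertex.
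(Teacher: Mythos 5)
There is a genuine gap, and it sits exactly at the point you flagged as a ``subtlety'' and then dismissed. A hypercubelike polyhedron is allowed to be unbounded, and an unbounded simple $d$-polyhedron has, at each vertex, exactly $d$ incident edges \emph{some of which may be extreme rays}, i.e.\ edges containing only one vertex. The set $b(P)$ records only vertices, so an edge of $P$ need not correspond to a pair of elements of $C$ at Hamming distance $1$: a ray corresponds to a hypercube edge only one of whose endpoints lies in $C$. Hence your deficient-degree argument at $v=(1,-1,\ldots,-1)$ does not produce a contradiction: the putative realization would simply have $d-1$ bounded edges at that vertex plus one extreme ray ``pointing toward'' the missing vertex $(-1,\ldots,-1)$, which is perfectly compatible with simplicity. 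The appeal to the face-lattice result of \cite[Thm.~3.4]{joswig2001vertex} does not repair this, because that result says the face lattice is determined by the vertex--facet incidences; it does not say every edge has two vertices. A sanity check that the degree argument cannot be right in this form: for $d=1$ the singleton $C=\{1\}$ is realized by a half-line whose unique vertex has zero neighbours in $C$, and more generally the paper itself realizes one-element sets by cones.

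Because of this, the obstruction has to be found at a deeper combinatorial level, and this is where the paper's proof diverges from yours. The paper accepts that the hypothetical $P$ would have $2d$ extreme rays (one for each hypercube edge incident to the two missing vertices), cuts $P$ with a hyperplane $c^Tx=y$ lying beyond all vertices, and studies the cross-section polytope $Q$: its vertices correspond to the rays of $P$, its edges to $2$-faces of $P$, and so on. Tracing this through, the faces of $Q$ of dimension $<d-1$ form the face lattice of two disjoint $(d-1)$-simplices, one clustered around each missing hypercube vertex, so the facet structure of $Q$ would be disconnected --- impossible for a polytope. So the contradiction is a connectivity statement about the ``polytope at infinity,'' not a vertex-degree count. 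To salvage your write-up you would need to replace the degree argument with something that genuinely rules out the ray configuration, e.g.\ an argument along the paper's lines.
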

\begin{proof}
    Suppose that we have a $d$-dimensional hypercubelike polyhedron $P$ with $b(P)=C$. Let $v$ be a vertex of $P$, and let $c$ be a vector such that $v$ is the unique minimizer of $c^Tx$ in $P$. Let $y$ be such that $y>c^Tv'$ for every vertex $v'$ of $P$, and let $Q:=\{x\in P: c^Tx=y\}$. The polyhedron $Q$ contains exactly one vertex for every infinite edge (i.e. extreme ray) of $P$. The extreme rays of $P$ in turn correspond to the $2d$ edges of the hypercube $[-1,1]^d$ that are adjacent to the two `missing' vertices $(-1,-1,\ldots,-1)$ and $(1,1,\ldots, 1)$. Likewise, every edge of $Q$ corresponds to a 2-dimensional face of $P$, which corresponds to a square in $[-1,1]^d$. Continuing the argument, we can derive the face lattice of $Q$. The faces of $Q$ for dimensions $<d-1$ form the same face lattice as two separate $d-1$-simplices. In particular, the facets (dimension $d-2$) of $Q$ are divided into two sets that are not connected in any way, and that is not possible if $Q$ were a polyhedron. Therefore our assumption that $b(P)=C$ must have been wrong. This completes the proof.
\end{proof}

Hypercube-realizability turns out to be a property that relates to many types of strategy sets. First, we need to assign subsets of $\{-1,1\}^d$ to games to be able to compare the two.

\begin{definition}\label{def:graphgame}
    Suppose we are given an LSP $\mathcal{G}$ where every Maximizer node has at most two outgoing edges. Let the nodes of $\vm$ be $v_1,v_2,\ldots,v_d$, and let the outgoing edge(s) of $v_i$ be labeled $e_i^1$ (and $e_i^2$). Given any Maximizer strategy, let $b(\sigma)\in \{-1,1\}^d$ be such that $b(\sigma)_i=-1$ if $e_i^1\in\sigma$ and $b(\sigma)_i=1$ otherwise. Let $b(\mathcal{G})=\SetOf{b(\sigma)}{\sigma \text{ admissible Maximizer strategy}}$.
    We define $b(\mathcal{G})$ similarly for other games on graphs where every Player 0/Maximizer node has at most two choices:
    \begin{itemize}
        \item If $\mathcal{G}$ is a sink parity game or weak unichain MDP, \[b(\mathcal{G})=\SetOf{b(\sigma)}{\sigma \text{ admissible Player 0 strategy}}\]
        \item If $\mathcal{G}$ is a parity game, \[b(\mathcal{G})=\SetOf{b(\sigma)}{\sigma \text{ winning Player 0 strategy}}\]
        \item If $\mathcal{G}$ is an MPG, \[b(\mathcal{G})=\SetOf{b(\sigma)}{\sigma \text{ Maximizer strategy such that every node has valuation}\geq 0}\] 
    \end{itemize}
\end{definition}

\begin{lemma}\label{lem:gamesHR}
    Let $\mathcal{G}$ be a game on a graph out of the five defined in \cref{def:graphgame}. Then $b(\mathcal{G})$ is hypercube-realizable. 
\end{lemma}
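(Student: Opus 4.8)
The plan is to reduce all five cases to the single case of an LSP, and for an LSP to read a hypercubelike polyhedron directly off the linear program \eqref{eq:LSPLP} of \cref{thm:main}.

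For the LSP case, let $P$ be the feasible region of \eqref{eq:LSPLP}. By \cref{thm:main} (items~1 and~2, which, as noted after its proof, do not use nondegeneracy) $P$ is simple and its vertices are exactly the basic feasible solutions, in bijection with the admissible Maximizer strategies $\sigma$ via $e\in\sigma\Leftrightarrow x_e\text{ basic}\Leftrightarrow x_e>0$; the last equivalence uses the explicit value $x_e=\sum_{v\in\vm}b_v\walks^{\sigma}_{v,j_1}(t)\ge b_{j_1}>0$ from \cref{thm:interpretation}. Because the $(\top,\top)$-column of $A$ is identically zero, $x_{(\top,\top)}=0$ at every vertex but $x_{(\top,\top)}>0$ along the recession ray, so $F_{(\top,\top)}:=\{x\in P:x_{(\top,\top)}=0\}$ is a facet through every vertex; apart from it, the only facets are the sets $F_e:=\{x\in P:x_e=0\}$ for edges $e$ belonging to some but not all admissible strategies (all other $F_e$ are empty or all of $P$). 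At a vertex $\sigma$ the facets through it are $F_{(\top,\top)}$ together with one $F_e$ per Maximizer node with two outgoing edges both of which occur in some admissible strategy — call these the \emph{genuinely binary} nodes, say $\ell$ of them — so $P$ is simple of dimension $\ell+1$.

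I would then exhibit the facet partition consisting of the singleton block $\{F_{(\top,\top)}\}$ and, for each genuinely binary node $v_i$, the pair $\{F_{e_i^1},F_{e_i^2}\}$: both members are facets since each edge occurs in some admissible strategy, and the two are disjoint because every non-empty face of the pointed polyhedron $P$ contains a vertex, while no vertex-strategy omits both outgoing edges of $v_i$. This is a partition into $\ell+1=\dim P$ blocks of size $1$ or $2$, so $P$ is hypercubelike, and, choosing the labelling so that $b(v)_i=-1\Leftrightarrow e_i^1\in\sigma$ and fixing the singleton coordinate to $-1$, one checks $b(P)=\SetOf{b(\sigma)}{\sigma\text{ admissible}}$ on the genuinely binary coordinates. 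Finally, to match the index set of \cref{def:graphgame} (one coordinate per node of $\vm$), set $P':=P\times[0,\infty)^{p}$ with $p=|\vm|-\ell-1$; each factor contributes a single facet, i.e.\ a singleton block whose coordinate I fix to $-1$, which matches the fact that a one-choice (or effectively one-choice) node $v_i$ always has $b(\sigma)_i=-1$. Then $P'$ is hypercubelike of dimension $|\vm|$ with $b(P')=b(\mathcal{G})$, establishing the lemma for LSPs.

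The remaining four cases I would reduce to this one by weight-respecting reductions. A parity game reduces to a sink parity game by the standard construction (see \cite{van_dijk_worst-case_2024}), and a sink parity game becomes an LSP via $w(e)\mapsto(-|V|)^{w(e)}$, under which a cycle has positive total weight precisely when its highest priority is even; hence the winning (resp.\ admissible) strategies correspond to the admissible Maximizer strategies of the LSP. A mean payoff game's ``value $\ge 0$'' problem reduces to an LSP by the classical reduction, and a weak unichain MDP is handled identically using the analogous linear program whose feasible bases are the admissible Maximizer strategies (as in \cite{Friedmann2011ExponentialPrograms}). In each case one checks that the reduction keeps at most two choices at every Maximizer node, that auxiliary vertices are one-choice, and that it carries the relevant strategy set bijectively onto the admissible strategies of the target; adding fixed coordinates for auxiliary one-choice nodes does not affect hypercube-realizability, so the LSP case finishes the argument. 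The delicate point — which I expect to be the main obstacle — is the mean payoff case, where ``value $\ge 0$'' is non-strict, so a zero-mean cycle is admissible for the MPG notion but not for the LSP notion; I would reconcile this with the degeneracy-removing perturbation of the previous subsection (scaling all weights by $2^{|V|}$ and adding $2^{i-1}$ on the out-edges of $v_i$), which makes every zero-weight cycle positive while preserving the sign of every other cycle and adding no choices, thereby converting ``all valuations $\ge 0$'' into ``admissible'' exactly.
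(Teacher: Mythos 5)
Your route is essentially the paper's: for the LSP case the paper simply invokes item~2 of \cref{thm:main} (feasible bases of the LP \eqref{eq:LSPLP} are exactly the admissible strategies and the polyhedron is simple), and handles the other four games by the reductions to LSP and, for weak unichain MDPs, by Friedmann's LP; you unpack the same argument in more detail, and the overall structure is sound.

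Two slips in the added detail are worth correcting. First, your facet inventory is not quite right: it is not true that $F_e$ is ``empty or all of $P$'' when $e$ lies in no admissible strategy. Since $b>0$ makes every basic feasible solution nondegenerate (all basic variables strictly positive, by \cref{thm:interpretation}), every nonbasic pivot direction is locally feasible, so no coordinate vanishes identically on $P$; hence for a two-choice node whose second edge occurs in no admissible strategy, $\{x\in P: x_e=0\}$ is a genuine facet containing all vertices. Consequently $\dim P$ is $1$ plus the number of two-choice Maximizer nodes, not $\ell+1$. This does not hurt the conclusion --- such facets are exactly the permitted singleton blocks, whose coordinate is constant over all vertices, matching \cref{def:graphgame} --- but the partition and the product padding should be restated accordingly (only genuinely one-choice nodes need the $[0,\infty)$ factors). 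Second, the sentence ``one checks that the reduction keeps at most two choices at every Maximizer node \dots and carries the relevant strategy set bijectively'' does not hold for the classical constructions as usually stated: the standard MPG-to-LSP and parity-to-sink-parity reductions add retreat/escape edges at Player~0 vertices, which both creates a third choice at binary nodes and introduces admissible strategies with no counterpart in the original game. One has to place the escape options on Minimizer-owned auxiliary vertices (for instance by subdividing edges with pass-through Minimizer nodes carrying the escape to $\top$), so that the Maximizer's choice structure and strategy set are preserved; your perturbation fix for the non-strict ``value $\geq 0$'' condition is the right one, but this additional adjustment is needed for the reduction step to go through as claimed (a point the paper's own one-line appeal to ``their reductions to LSP'' also leaves implicit).
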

\begin{proof}
    For LSP, this is a direct consequence of statement 2 from \cref{thm:main}. This then in turn proves the analogous statement for (sink) parity games and mean payoff games, because of their reductions to LSP. 
    
    For every weak unichain MDP, there is an LP formulation, such that there is a one-to-one correspondence between admissible strategies and the vertices of the polyhedron of the LP: more details can be found in \cite{Friedmann2011ExponentialPrograms}.
\end{proof}
For MPGs we can also consider the sets of strategies which give valuation $\geq x$ for some fixed $x\in \mathbb{R}$. That is because we can transform the game by subtracting $x$ from the reward of each edge/action, which reduces each valuation by $x$.

Before we continue to prove the main result of this section, we need one more definition, which will help us with the proof, which is multi-connectivity. It is defined recursively as follows.
\begin{definition}
        Consider a subset $C$ of $\{-1,1\}^d$, and consider $C$ to be a subset of vertices of the $d$-dimensional hypercube $[-1,1]^d$. Then we call $C$ \emph{multi-connected} if and only if the following three statements hold:
        \begin{enumerate}
            \item $C$ is empty, or there exists a $g\in C$ such that $C\backslash \{g\}$ is multi-connected
            \item For every face $F$ of the hypercube, all vertices of $C\cap F$ are connected by edges of $F$. Defined more precisely: for each pair of vertices $v,v'$ of $C\cap F$, there exists a $v,v'$ path in the vertex-edge graph of $F$, where each internal vertex of the path is also in $C\cap F$.
            \item For every face $F$ of the hypercube, the vertices of $F\backslash C$ are also connected to each other by edges of $F$.
        \end{enumerate}
    \end{definition}
We want to prove a connection between hypercube-realizable and multi-connected sets. To do so, we need on more property of multi-connected sets.
    \begin{lemma}\label{lem:faceMC}
        Let $C\subseteq \{-1,1\}^d$ be a multi-connected set, and consider $C$ to be a subset of vertices of $[-1,1]^d$. Let $F$ be a $d'$-dimensional face of $[-1,1]^d$, and let $f^F$ be a function that maps $F$ to $[-1,1]^{d'}$ by simply removing the coordinates that are the same for all points in $F$. Then $f^F(C\cap F)$ is multi-connected as well.
    \end{lemma}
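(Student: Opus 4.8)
The plan is to prove the claim by induction on $d-d'$, the codimension of the face $F$, reducing the general case to the case where $F$ is a facet. For the base case $d - d' = 0$ we have $F = [-1,1]^d$ and $f^F$ is the identity, so there is nothing to prove. For the inductive step, write $F$ as an intersection of facets; it suffices to treat the case where $F$ is a single facet, say $F = \{x \in [-1,1]^d : x_k = \varepsilon\}$ for some coordinate $k$ and sign $\varepsilon \in \{-1,1\}$, and then iterate. So the heart of the argument is: \emph{if $C \subseteq \{-1,1\}^d$ is multi-connected and $F$ is the facet $\{x_k = \varepsilon\}$, then $C' := f^F(C \cap F) \subseteq \{-1,1\}^{d-1}$ is multi-connected.}

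To show $C'$ is multi-connected I would verify the three defining properties in turn. For property~2 (vertices of $C' \cap F'$ are edge-connected within each face $F'$ of $[-1,1]^{d-1}$): every face $F'$ of $[-1,1]^{d-1}$ is the image under $f^F$ of a face $F''$ of $[-1,1]^d$ contained in $F$, namely the face obtained by reimposing $x_k = \varepsilon$. Then $C' \cap F' = f^F(C \cap F'')$, and $f^F$ restricted to $F''$ is a graph isomorphism onto $F'$ (it just drops the constant coordinate $x_k$), so edge-connectivity of $C \cap F''$ inside $F''$ — which holds because $C$ is multi-connected — transfers directly to edge-connectivity of $C' \cap F'$ inside $F'$. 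Property~3 (the complement $F' \setminus C'$ is edge-connected within $F'$) follows by exactly the same isomorphism, since $F'' \setminus (C \cap F'') $ maps to $F' \setminus C'$ and the former is edge-connected inside $F''$ by property~3 for $C$. Property~1 (the inductive ``peeling'' condition: $C'$ is empty or some $g \in C'$ can be removed leaving a multi-connected set) is the delicate one and I would handle it by a secondary induction on $|C|$: if $C = \emptyset$ then $C' = \emptyset$; otherwise pick $g \in C$ with $C \setminus \{g\}$ multi-connected (property~1 for $C$). If $g \notin F$ then $C \cap F = (C\setminus\{g\}) \cap F$, so $C' = f^F((C\setminus\{g\})\cap F)$ is multi-connected by the secondary induction hypothesis, and $C'$ itself needs no peeling step beyond what that gives. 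If $g \in F$, then $f^F(g) \in C'$ and $C' \setminus \{f^F(g)\} = f^F((C \setminus \{g\}) \cap F)$, which is multi-connected by the secondary induction hypothesis applied to the multi-connected set $C \setminus \{g\}$; hence $g' := f^F(g)$ witnesses property~1 for $C'$.

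The main obstacle I anticipate is making the secondary induction on $|C|$ in property~1 fully rigorous: one must be careful that ``the element $g$ guaranteed by property~1 for $C$'' is compatible with the peeling order we want for $C'$, and in the case $g \notin F$ we are asserting that $C'$ was \emph{already} obtained from the smaller set $C \setminus \{g\}$ without needing to identify a fresh removable element — so the statement we carry through the induction must be precisely ``$f^F(C \cap F)$ is multi-connected'' (an assertion purely about the image set), not a statement about a specific peeling sequence. With that framing, the case analysis above closes cleanly: every face of the smaller cube lifts to a face of the big cube inside $F$, $f^F$ is a graph isomorphism on each such face, and multi-connectivity is preserved coordinatewise. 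Iterating the facet case across the $d - d'$ coordinates that are constant on $F$ then yields the lemma in full generality.
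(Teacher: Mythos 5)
Your argument is correct and takes essentially the same route as the paper's proof: the paper also verifies conditions 2 and 3 by identifying faces of $[-1,1]^{d'}$ with faces of the big cube contained in $F$, and establishes condition 1 by pushing the peeling sequence $C=C_m, C_{m-1},\ldots, C_0=\emptyset$ through $f^F(\,\cdot\,\cap F)$, which is exactly your induction on $|C|$ unrolled (your two cases $g\notin F$ and $g\in F$ correspond to a repeated term versus a term with one element removed). Your preliminary reduction to the facet case is harmless but unnecessary, since the same face correspondence works directly for a face of arbitrary codimension.
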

    \begin{proof}
        By using the inductive definition of multi-connectivity, there is a sequence of multiconnected sets in $\{-1,1\}^d$ given by $C=C_m,C_{m-1},C_{m-2},\ldots, C_0=\emptyset$, such that: every $C_i$ in the sequence is multi-connected, and every next term equals the previous with one element removed. We then consider the sequence $f^F(C_m\cap F),f^F(C_{m-1}\cap F),\ldots, f^F(C_0\cap F)$, and want to show that each term in the sequence is multi-connected. For each set in the sequence, its elements are connected by edges in every face of the hypercube $[-1,1]^{d'}$: since $C_i$ was multi-connected, this means its elements are connected in every face of $F$, and therefore also the vertices of $f^F(F\cap C_i)$ are connected in every face of $[-1,1]^d$. By the same argument, the elements of $[-1,1]^d\backslash f^F(C_i\cap F)$ are connected in every face. Now we only need to prove the second condition of multi-connectivity. Every term in the sequence $f^F(C_m\cap F),f^F(C_{m-1}\cap F),\ldots, f^F(C_0\cap F)$ is either equal to the previous term of has one element removed compared to the previous. The last set in the sequence is the empty set, which is multi-connected. This in turn implies that the second to last unique set is multi-connected, and so on. Hence all terms of the sequence $f^F(C_m\cap F),f^F(C_{m-1}\cap F),\ldots, f^F(C_0\cap F)$ are multi-connected, and in particular $f^F(C_m\cap F)=f^F(C\cap F)$ is multi-connected.
    \end{proof}

Now we are ready to make the connection with multi-connected sets.
\begin{lemma}\label{lem:HR-MC}
    Every hypercube-realizable set is multi-connected.
\end{lemma}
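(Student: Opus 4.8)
The plan is to show that any hypercube-realizable set $C = b(P)$ satisfies all three clauses of the definition of multi-connectedness, using the face-lattice characterization of simple polyhedra (the cited \cite[Thm.~3.4]{joswig2001vertex}) together with an induction that lets us peel off one vertex at a time. First I would address clauses 2 and 3, which are the "static" conditions, and only afterwards worry about clause 1, which is the recursive one.

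For clause 2, fix a face $F$ of the cube $[-1,1]^d$; say $F$ fixes coordinates in some index set $I$ to prescribed signs. The vertices of $C$ lying in $F$ are exactly the vertices $v$ of $P$ whose sign vector $b(v)$ agrees with $F$ on $I$, i.e. the vertices of $P$ that lie on the corresponding intersection of facets $\bigcap_{i\in I} S_i^{s_i}$ of $P$ (using that $|S_i|=2$ for each such $i$ — if some $|S_i|=1$ the coordinate is constant on all of $b(P)$ and the constraint is either vacuous or makes $C\cap F=\emptyset$, which is trivially connected). That intersection of facets is itself a face $P_F$ of the simple polyhedron $P$, hence is a (lower-dimensional, still simple) polyhedron, so its vertex-edge graph is connected (every polyhedron's graph is connected — Balinski, or just the elementary fact for polyhedra). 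Moreover an edge of $P_F$ is an edge of $P$, and under $b$ edges of $P$ map to edges of the cube; so a path in the graph of $P_F$ through vertices of $P_F$ maps to the required path in $F$ through vertices of $C\cap F$. This gives clause 2. Clause 3 is the "complementary" statement: $F\setminus C$ consists of the cube-vertices of $F$ that are \emph{not} sign vectors of vertices of $P$; I would handle this by the same argument applied to the complementary situation — one should check that the combinatorial cube-faces "not realized" by $P$ are themselves governed by a dual incidence pattern, or more simply argue directly that if two missing vertices of $F$ were separated by realized vertices along every path, one could extract a forbidden sub-configuration (a "hole") contradicting that $P$ restricted to $F$ is a genuine polyhedron, in the spirit of the argument in \cref{lem:disconnected}. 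This second half is where I expect the real work to lie.

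For clause 1 I would argue that $C = b(P)$ always admits a vertex $g$ whose removal still yields a hypercube-realizable (hence, by the inductive hypothesis on $|C|$, multi-connected) set. The natural candidate is a vertex $g$ that is "extremal" in the combinatorial sense: pick a generic linear functional $c$, let $g$ be its unique minimizer over $P$, and perturb the facet of $P$ opposite to $g$ (or cut $g$ off with a new halfspace) so as to obtain a polyhedron $P'$ that is still simple, still hypercubelike with the same facet-partition restricted to the relevant coordinates, and with $b(P') = C\setminus\{g\}$. One must check that truncating the vertex $g$ of a hypercubelike polyhedron, then pushing the new facet back to infinity or merging it appropriately, does not destroy hypercubelikeness — this is the delicate point, since the definition requires the facets to partition into singletons and disjoint pairs. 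Alternatively, and perhaps more cleanly, I would pick $g$ so that its sign vector is extreme inside $C$ (a vertex of the polytope $\mathrm{conv}(C)$ whose only cube-neighbours in $C$... ) and verify directly that $C\setminus\{g\}$ still satisfies clauses 2 and 3, invoking \cref{lem:faceMC} to reduce the verification to faces through $g$. Combining: clauses 2 and 3 hold for every hypercube-realizable set by the polyhedral arguments above, clause 1 holds by the peeling step, and an induction on $|C|$ (base case $C=\emptyset$, trivially multi-connected) closes the proof.

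The main obstacle, as indicated, is clause 3 together with the compatibility of the peeling step with the rigid "disjoint pairs of facets" structure of hypercubelike polyhedra: one has to be sure that the combinatorial operations used (restricting to a face, removing an extreme vertex) stay inside the class of hypercubelike polyhedra, rather than merely inside the class of simple polyhedra. I expect \cite[Thm.~3.4]{joswig2001vertex} and \cref{lem:faceMC} to do most of the bookkeeping, but the verification that a suitable $g$ exists with $C\setminus\{g\}$ still realizable is the crux.
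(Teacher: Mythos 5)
Your outline matches the paper's architecture (induction on $|C|$; clause 2 via connectivity of the vertex--edge graph of the corresponding face of $P$; clause 3 via a forbidden configuration in the spirit of \cref{lem:disconnected}; clause 1 by removing one extreme vertex while staying hypercube-realizable), but the two places you yourself flag as ``the real work'' and ``the crux'' are left as genuine gaps, and they are exactly the content of the paper's proof. For clause 3 it is not enough to say one ``could extract a forbidden sub-configuration'': the actual argument takes a face $F$ of minimal dimension on which $F\setminus C$ is disconnected and shows, by playing pairs of vertices against facets of $F$, that then \emph{exactly two} vertices of $F$ are missing from $C$ and they are antipodal in $F$. Only after that reduction does the corresponding face $F_P$ of $P$ realize $\{-1,1\}^{\dim F_P}\setminus\{(-1,\dots,-1),(1,\dots,1)\}$, contradicting \cref{lem:disconnected}. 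Without the minimality argument and the ``exactly two antipodal missing vertices'' step, no contradiction is reached; note also that \cref{lem:disconnected} only forbids that one specific pattern, so a generic ``hole'' is not automatically excluded.

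For clause 1, the construction you propose does not work as stated: truncating the vertex $g$ by a new halfspace adds a facet (so the facet partition into $d$ singletons/disjoint pairs is destroyed) and creates new vertices on the edges through $g$, so the resulting polyhedron is not hypercubelike and its sign-vector set is not $C\setminus\{g\}$. The device that actually works, and that your phrase ``pushing the new facet back to infinity'' gestures at without constructing, is projective: choose a generic $c$, let $v'$ be the vertex of $P$ with the \emph{largest} value of $c^Tx$ among vertices, intersect $P$ with the halfspace $c^Tx\le c^Tv'$ (whose boundary meets the vertex set only in $v'$), and apply a projective transformation sending that hyperplane to infinity. This preserves the face lattice except for the faces whose only vertex is $v'$, so the image is again hypercubelike and realizes $C\setminus\{b(v')\}$, and the induction on $|C|$ then applies. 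Your fallback option --- choose $g$ extreme in $\operatorname{conv}(C)$ and check clauses 2 and 3 for $C\setminus\{g\}$ directly --- would not close the induction either, because clause 1 is recursive: you need $C\setminus\{g\}$ to be hypercube-realizable (or to carry the full recursive structure) in order to keep peeling, not merely to satisfy the two static clauses. So the proposal identifies the right skeleton but omits the two arguments that constitute the proof.
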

\begin{proof}
    Given some $d$, we consider $C\subseteq \{-1,1\}^d$ that is hypercube-realizable. We prove that the three properties of multi-connectivity hold for $C$, and do this by induction.

    As an induction basis, the empty set is both hypercube-realizable (realized by the empty polyhedron) and multi-connected (by definition). Likewise, if $|C|=1$, it is hypercube-realizable by a cone, and one can quickly verify that the properties of multi-connectivity hold.
    
    Next, suppose we have proven all three statements for all $|C|\leq m$ for some $m\geq 1$, and consider some $C\subseteq \{-1,1\}^d$ with $|C|=m+1$.

    Let $P$ be a hypercubelike polyhedron with $C=b(P)$. Let $v$ be any vertex of $P$, and let $c\in\mathbb{R}^d$ be such that $v$ is the unique minimizer of $c^Tx$ over $P$. Assume $c$ is chosen such that the value of $c^Tx$ is distinct in every vertex of $P$. Let $v'$ be the vertex with the largest value of $c^Tx$ out of all vertices of $P$. Note: $P$ may be unbounded, so $v'$ is not necessarily a maximizer of $c^Tx$. Because $m+1\geq 2$, we have $v'\neq v$. We can apply rotation, translation and scaling to $P$ such that $c=e_1$ (being the first unit vector in $\mathbb{R}^d$), and additionally $c^Tv'=0$ and $c^Tv=-1$. This means $-1<c^Tx<0$ for all other vertices of $P$. Let $Q=\{x\in P: c^Tx\leq 0\}$, this is a polytope whose vertices consist of the vertices of $P$, and some additional vertices which have its first coordinate equal to 0.

    Now we embed $Q$ into the hyperplane $x_{d+1}=1$ in $\mathbb{R}^{d+1}$, simply by setting the $d+1$-th coordinate to $1$ for each point (see \cref{fig:projectionexample}). Let $\mathcal{C}$ be the cone in $\mathbb{R}^{d+1}$ given by~$\mathcal{C}=\SetOf{\alpha x}{x\in Q, \alpha\in \mathbb{R}_{\geq 0}}$. Let $Q'$ be the intersection of $\mathcal{C}$ with the hyperplane $e_1^Tx=-2$ ($e_1$ being the unit vector $(1,0,0,\ldots, 0)$ in $\mathbb{R}^{d+1}$). 
    
    Because of how we defined it, $Q'$ can be interpreted as the result of a projective transform on $Q$ , which sends the hyperplane $c^Tx=0$ to `infinity.' Projective transforms preserve affine dependence, and because $Q$ lies on one side of the hyperplane $c^Tx=0$, convexity is preserved as well. It follows that $Q$ is a polyhedron, with the same face lattice as $Q$, except that the faces that were subsets of the hyperplane $c^Tx=0$ are not present anymore. As a result, the face lattice of $Q'$ is the same as that of $P$, except that all faces in which $c^Tx\geq 0$ for all $x$ are removed (these are precisely the face whose \emph{only} vertex was $v'$).
    It follows that $Q'$ is hypercubelike, and we can pick $b(Q)$ such that $b(Q)=b(P)\backslash \{b(v')\}$. This shows that indeed there is a $g\in C$ that can be removed to obtain a multi-connected set, which is the first statement of the induction step.

    \begin{figure}[htbp]
        \centering
        \includegraphics[width=0.7\linewidth]{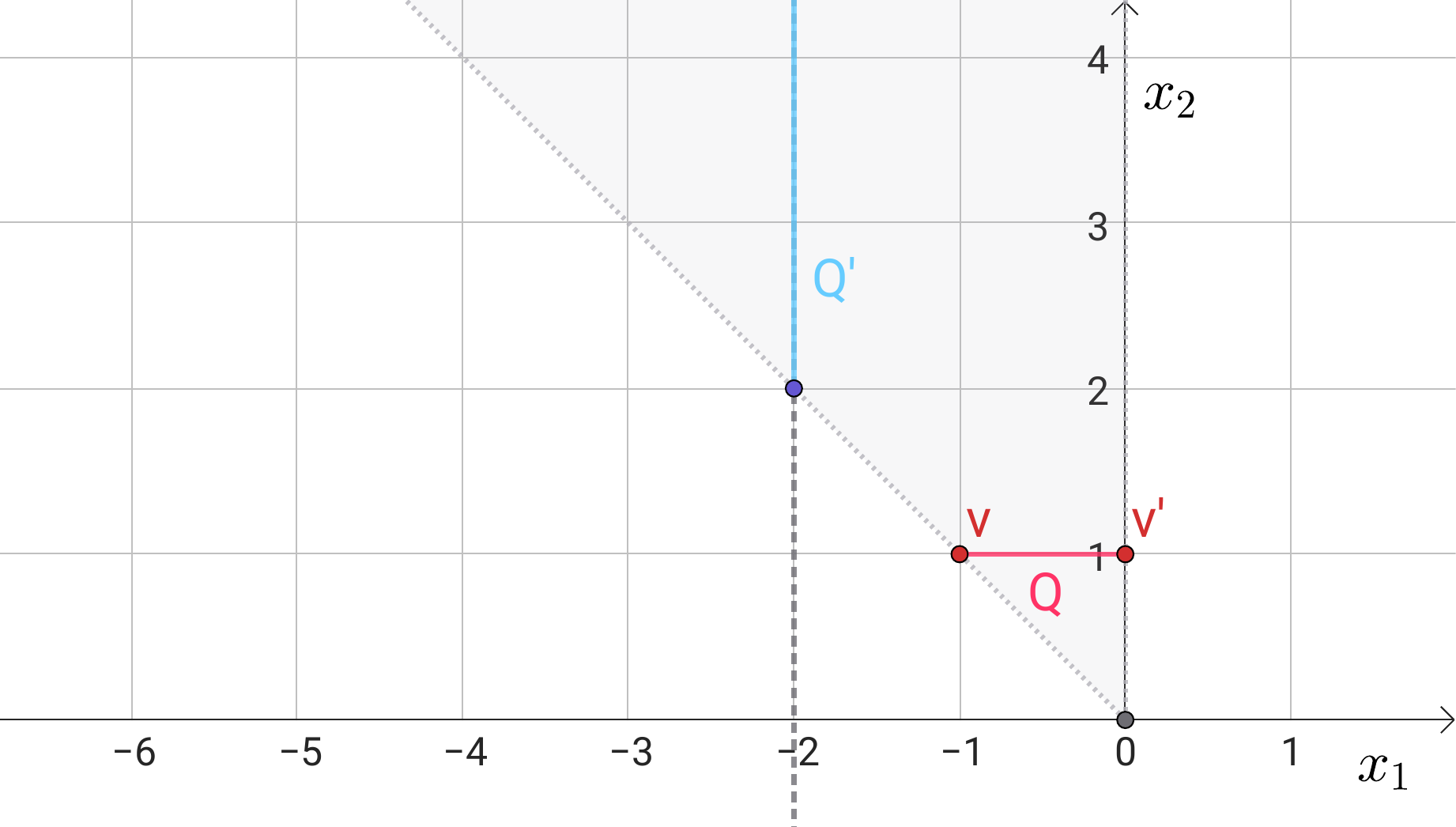}
        \caption{The induction step illustrated for the case $d=1$ and $m=1$. In this case, $P$ is a line segment, and we have $P=Q$. The cone $\mathcal{C}$ is shaded in grey. After the `projection', we get the half-line $Q'$, which has one vertex less than $Q$.}
        \label{fig:projectionexample}
    \end{figure}
    For the second statement, we need to show that for every face $F$ of the $d$-dimensional hypercube, the vertices of $C\cap F$ are connected by edges. However, if $P$ again is a hypercubelike polyhedron with $b(P)=C$, then every face $F$ of the hypercube corresponds to a (possibly empty) face $F'$ of the polyhedron $P$. It is well-known that the edge-vertex graph of any face of a polyhedron is connected. This follows for example from the correctness of the simplex algorithm performed on that face: every vertex of $F'$ has a path in $F'$ towards the optimal vertex. Since every finite edge in $F'$ corresponds to an edge in $F$ whose endpoints are in $C$, it then follows that the vertices of $C\cap F$ must all be connected as well. This completes the second statement of the induction step.

    Finally, for the third statement, suppose for contradiction that there is some face $F$ of $[-1,1]^d$ and some $x_1,x_2\in \left(\{-1,1\}^d\cap F\right)\backslash C$, such that every $x_1,x_2$-path in the vertex-edge graph of $F$ has some vertex of $C$ in it. We pick such an $F$ such that its dimension is as small as possible. 
    
    We then claim that $x_1$ and $x_2$ are the only vertices of $F$ that are not in $C$, and they are opposite to each other in $F$. We do so by contradiction: suppose there exists a vertex $x_3\in F\backslash C$ that is not opposite to $x_1$ (and not equal to $x_1$). That means that $x_1$ and $x_3$ share a facet of $F$, say $F'$. If $x_1$ and $x_3$ are not connected (via non-$C$ vertices) in $F'$, this contradicts that $F$ was the smallest face with this property. If $x_1$ and $x_3$ are connected to each other in $F'$, this implies they are not connected to $x_2$, and in particular $x_2\neq x_3$. But since $x_1$ and $x_3$ cannot be both opposite to $x_2$, this means that $x_2$ shares a facet $F''$ of $F$ with $x_1$ or $x_3$. This again yields a facet of smaller dimension where two non-$C$ vertices are disconnected, which is not possible. Hence $x_1$ and $x_2$ are the only vertices in $F$ that are not in $C$.

    However, the face $F$ corresponds to a face $F_P$ of $P$. The face $F_P$ must be hypercubelike, where we can pick $b$ such that $b(F_P)=\{-1,1\}^{\dim(F_P)}\backslash \{(-1,-1,\ldots,-1),(1,1,\ldots,1)\}$. But this is a contradiction with \cref{lem:disconnected}. We conclude that our initial assumption was wrong, and that $F\backslash C$ is always connected. This completes the third step of the induction proof, which completes the proof of the lemma.
\end{proof}

The main goal of this section is to show that the sets of strategies of all these games have a property called lopsidedness. This property was introduced to describe properties of which orthants a convex set can occupy. The following definition (also called total asymmetry) is due to \cite[Thms. 3,4]{lawrence_lopsided_1983}

\begin{definition}\label{def:totalasymmetry}
    Consider a subset $C$ of $\{-1,1\}^d$, and consider $C$ to be a subset of vertices of the $d$-dimensional hypercube $[-1,1]^d$. For every face $F$ of the hypercube, let $f_F:\{-1,1\}^d\to \{-1,1\}^d$ be the function (involution) that maps each vertex in $F$ to its opposite vertex in $F$, and leaves all other vertices the same. 

    We call $C$ \emph{lopsided} if for every $F$ for which $f_{F}(C)=C$, we have that either no vertices of $F$ are in $C$, or all vertices of $F$ are in $C$.
\end{definition}

It turns out that lopsided sets have a very rich combinatorial structure. For example, \cite{bandelt_combinatorics_2006} describes 30 equivalent definitions of lopsidedness.

\begin{lemma}\label{lem:MC-LS}
    Every multi-connected set is lopsided.
\end{lemma}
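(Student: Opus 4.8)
The plan is to prove the contrapositive: if $C$ is not lopsided, then $C$ is not multi-connected. So suppose $C$ is not lopsided. By \cref{def:totalasymmetry}, there is a face $F$ of the hypercube with $f_F(C)=C$, but such that $C\cap F$ is neither empty nor all of $F$. Fix such an $F$ and pick it of \emph{minimal dimension} among all faces witnessing the failure of lopsidedness. Let $d'=\dim(F)$; since a $0$-dimensional face trivially has all or no vertices in $C$, we have $d'\geq 1$. The goal is to extract from $F$ a violation of one of the three conditions in the definition of multi-connectivity — most naturally condition (2) or condition (3), the connectivity statements about $C\cap F$ and $F\setminus C$.

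The key structural step is to understand how $C$ intersects $F$ and its facets. Write $f^F$ for the coordinate-forgetting map of \cref{lem:faceMC}; by that lemma it suffices to show that $f^F(C\cap F)$ fails multi-connectivity inside $[-1,1]^{d'}$, so we may as well assume $F$ is the whole cube and $C=f^F(C\cap F)$ is $f_{[-1,1]^{d'}}$-symmetric, i.e. $C$ is invariant under the global antipodal map $x\mapsto -x$. Now look at any facet $G$ of the cube and its antipodal facet $-G$; antipodal symmetry sends $C\cap G$ bijectively to $C\cap(-G)$. By minimality of $d'$, on every proper face the set behaves "lopsidedly" in the relevant sense — in particular no proper face can itself be antipodally symmetric with a mixed intersection — which forces each facet $G$ to satisfy $C\cap G=\emptyset$, $C\cap G=G$, or to be handled by the induction. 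The plan is to show that antipodal symmetry together with minimality pins $C$ down enough that either (a) there are two antipodal vertices $v,-v$ both outside $C$ while $C$ separates them in the edge graph — violating condition (3) — or (b) two antipodal vertices both in $C$ with every path between them in the cube leaving $C$ — violating condition (2). Concretely, if some facet $G$ has $C\cap G=G$ then $C\cap(-G)=-G$, and since $C\neq$ everything there is a vertex $u\notin C$; its antipode $-u\notin C$ too, and any $u,-u$ path must cross the "full" facets, so one checks it must pass through $C$, giving the violation of (3). The symmetric argument with a facet disjoint from $C$ handles the other case.

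The main obstacle I expect is the bookkeeping in the inductive/minimality argument: ruling out the intermediate possibility that $C\cap G$ is a proper nonempty subset of a facet $G$ in a way that is \emph{not} itself a lopsidedness violation on $G$, yet still lets $C\cap F$ evade the connectivity failure. Handling this cleanly probably requires invoking \cref{lem:faceMC} to pass to the facet, using the minimality of $d'$ to say $G$ (suitably coordinatized) is lopsided, and then combining the lopsidedness data on all facets with the antipodal symmetry to reconstruct a path-connectivity contradiction in $F$ itself. A secondary subtlety is making sure the path exhibiting the violation of (2) or (3) genuinely stays inside $F$ and inside (or outside) $C$ as required, which is exactly the kind of statement \cref{lem:faceMC} and the minimal-face choice are set up to support. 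Once the symmetric facet $G$ versus $-G$ dichotomy is nailed down, assembling the final contradiction should be short.
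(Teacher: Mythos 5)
Your reduction step is sound and matches the first half of the paper's argument: assuming (toward contradiction) that $C$ is multi-connected, you may restrict to a minimal face $F$ witnessing the failure of lopsidedness, use \cref{lem:faceMC} to pass to $D=f^F(C\cap F)$, and thereby reduce to the case where $D\subseteq\{-1,1\}^{d'}$ is invariant under the global antipodal map, is neither empty nor everything, and (by minimality) has no lopsidedness violation on a proper face. The paper does essentially the same reduction, phrased as an outer induction on dimension instead of a minimal-face choice.

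The gap is in what comes next, and it is not mere bookkeeping. Your facet dichotomy is vacuous: minimality only constrains faces $G$ with $f_G(D\cap G)=D\cap G$, so nothing forces an arbitrary facet to be full or empty; and in fact in the symmetric mixed case no facet can be full or empty at all, because a facet $G$ and its antipodal facet $-G$ cover all vertices and the antipodal symmetry sends $D\cap G$ onto $D\cap(-G)$, so $D\cap G=G$ would give $D=\{-1,1\}^{d'}$ and $D\cap G=\emptyset$ would give $D=\emptyset$. Hence every facet is mixed, which is exactly the ``intermediate possibility'' you defer as the main obstacle --- i.e.\ the entire content of the lemma is left unproved. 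The paper closes this case with an idea your proposal never touches: condition (1) of multi-connectivity. It extracts a vertex $g\in C$ such that $C'=C\setminus\{g\}$ is still multi-connected, lets $I$ be the set of coordinate directions in which $g$ has a neighbor in $C$, and applies the connectivity conditions (2) and (3) to the two complementary faces through $g$ spanned by $I$ and by its complement: connectivity of the complement of $C'$ in the first face forces the vertex $x_I$ opposite $g$ in that face into $C'$, the symmetry $f_F(C)=C$ then puts $f_F(x_I)$ into $C$, and connectivity of $C$ in the second face yields the contradiction (the degenerate cases $I=[d]$ and $I=\emptyset$ give $C=\{-1,1\}^{d'}$ or $C=\{g\}$, both lopsided). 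Without using the removability condition (1), or some substitute for it, a purely ``connectivity of $C$ and of its complement on all faces'' argument of the kind you sketch does not go through, so the proposal as written does not establish the lemma.
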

\begin{proof}
    We show this by nested induction. For the outer induction layer, we use induction on $d$. For $d=1$, every $C\subseteq\{-1,1\}^d$ is both lopsided and multi-connected, hence the implication holds. Now suppose we have proven the statement for $d=l$ with $l\geq 1$ and we want to show it for $d=l+1$.
    
    Moving to the inner induction layer, we prove the statement for $d=l+1$ by induction on $|C|$. For $|C|=0$ we have $C=\emptyset$, where clearly both multi-connectivity and lopsidedness hold. Now suppose we have proven the statement for $|C|=m$ and we want to show it for $|C|=m+1$. Assume for contradiction that there is some multi-connected $C$ of size $m+1$ that is not lopsided. By definition of multi-connectivity there is a $g\in C$ such that $C':=C\backslash\{g\}$ is multi-connected. We have $|C'|=m$, and thus by the inner hypothesis $C'$ is lopsided. Without loss of generality, assume that $g=(-1,-1,-1,\ldots, -1)$. By our assumption, adding $g$ to $C'$ turns it from a lopsided set into a non-lopsided set.

    We again consider $C$ as a set of vertices of the hypercube $[-1,1]^d$. Since $C$ is not lopsided, there is a face $F$ of the hypercube such that $f_{F}(C)=C$, and $F$ has vertices in $C$ and vertices not in $C$. If $F$ is a proper face of the hypercube (so dimension less than $d$), then we know that $f^F(F\cap C)$ is multi-connected by \cref{lem:faceMC}. By the outer induction hypothesis, that means that $D:=f^F(F\cap C)$ is also lopsided. By definition of $F$, we have $f_{[-1,1]^{\dim(F)}}(D)=D$ (recall that $f^{F}$ has projected $F$ here to $[-1,1]^{\dim(F)}$). But if we apply the definition of lopsidedness to $D$, either all or none of the vertices of $[-1,1]^{\dim(F)}$ must be in $D$. But this contradicts our definition of $F$, so our assumption that $F$ is a proper face was incorrect. So we must have $f_{F}(C)=C$, where $F$ is the maximal face $F=[-1,1]^d$, and $C$ is not equal to the empty set or to $\{-1,1\}^d$. In this case, we get $f_F(x)=(-x_1,-x_2,\ldots,-x_d)$.

    We now show that the described scenario is impossible. We know that there is some $g\in C$, and then also $f_F(g)\in C$. Without loss of generality, we may assume $g=(-1,-1,\ldots,-1)$.
        
    Let $I\subseteq [d]$ be the set of indices $i$ such that the vector $g+2e_i$ is in $C$ (with $e_i$ the $i$-th unit vector in $\mathbb{R}^d$). If $I=[d]$, this means that all the `neighbors' of $g$ in $C'$ are elements of $C'$. Since $g\notin C'$, this means $g$ cannot be connected by edges to any other elements of $\{-1,1\}^d\backslash C'$, and thus $g$ must be the only element of $\{-1,1\}^d\backslash C'$, since $C'$ is multi-connected. It follows that have $C=\{-1,1\}^d$, which is clearly lopsided.
        
    On the other hand, if $I=\emptyset$, then all the neighbots of $g$ are not in $C$. Since $g\in C$, and $g$ must be connected to the other elements of $C$ by multi-connectivity, $g$ must be the only element of $C$. It can easily be verified that $C=\{g\}$ is lopsided. Now we only need to consider the case $0<|I|<n$.
        
        Let $x_I$ be the vector defined by $x_i=\begin{cases}
            1 & i\in I\\
            -1 & i\notin I
        \end{cases}$. Since $0<|I|<n$, the vector $x_I$ does not equal $g$ or $f_F(g)$. Consider the face $F_I$ of the hypercube defined by the equations $x_i=-1\forall i\notin I$, and consider $F_I\cap C'$. The face $F_I$ contains $g$, which is not in $C'$, and all its neighbors in the face $F_I$ are elements of $C'$. Since $C'$ is multi-connected, this implies that all other vertices of $F_I$ have to be in $C'$, in particular $x_I\in C'$. We know $x_I\neq g$ and $x_I\neq f_F(g)$. Combining this with the fact that $f_{F}(C)=C$, this means that $f_F(x_I)\in C$ and therefore $f_F(x_I)\in C'$.

    Next, consider the face $F_I'$ given by $x_i=-1\forall i\in I$, and the set $C$. All the neighbours of $g$ in $F_{I}'$ are not in $C$ by definition of $I$, while $g\in C$. This implies by multi-connectivity of $C$, that there cannot be any other elements of $C$ in $F_I'$. However, $f_F(x_I)\in F_{I}'$ by definition, and we found that $f_{F}(x_I)\in C$, which is a contradiction. In conclusion, our assumption that $C$ was not lopsided was incorrect, and this completes the inner induction step. 
    Now we also completed the outer induction step, which completes the proof.
\end{proof}
   
Now, if we combine \cref{lem:gamesHR,lem:HR-MC,lem:MC-LS}, this gives us the main theorem of this section.
\begin{theorem}\label{thm:gameslopsided}
    Let $\mathcal{G}$ be any LSP, MPG, (sink) parity game or weak unichain MDP where every Maximizer/Player 0 node has at most two outgoing edges. Then the set $b(\mathcal{G})$ is lopsided.
\end{theorem}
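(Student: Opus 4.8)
The plan is simply to chain together the three lemmas just established, which together form a pipeline
\[
\text{arises from one of the listed games}\;\Longrightarrow\;\text{hypercube-realizable}\;\Longrightarrow\;\text{multi-connected}\;\Longrightarrow\;\text{lopsided}.
\]
No genuinely new argument is needed for the theorem itself; all the work sits in the three preceding lemmas, and the proof is a one-line composition.

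Concretely, I would argue as follows. Let $\mathcal{G}$ be one of the five kinds of game named in the statement, with every Maximizer/Player~0 vertex having at most two outgoing edges. First, by \cref{lem:gamesHR} the set $b(\mathcal{G})\subseteq\{-1,1\}^d$ is hypercube-realizable: for an LSP this comes directly from item~2 of \cref{thm:main} (the feasible bases of the LP \eqref{eq:LSPLP} are exactly the admissible Maximizer strategies, and that polyhedron is simple, hence hypercubelike once each Maximizer node has at most two options), and for the remaining four game types it follows via the reductions to LSP recorded in \cref{def:graphgame} and its surrounding discussion, or, for weak unichain MDPs, via the LP of \cite{Friedmann2011ExponentialPrograms}. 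Next, \cref{lem:HR-MC} gives that every hypercube-realizable set is multi-connected, so $b(\mathcal{G})$ is multi-connected. Finally, \cref{lem:MC-LS} gives that every multi-connected set is lopsided, so $b(\mathcal{G})$ is lopsided, which is exactly the claim.

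Since the proof is purely a composition of previously proved implications, there is no real obstacle in the theorem itself; the difficulty is entirely upstream. The two load-bearing steps are the realizability step, which rests on \cref{thm:main} and hence on the combinatorics of $\minwalks$/$\walks$ and the nondegeneracy hypotheses (together with the preprocessing in \cref{sec:degeneracy}-style reductions that make those hypotheses harmless), and the implication \cref{lem:MC-LS}, whose nested double induction (on $d$ and on $|C|$, leaning on \cref{lem:faceMC} and \cref{lem:disconnected}) carries the genuine combinatorial content. One point I would make explicit in the write-up is that the "at most two outgoing edges" restriction is precisely what forces the LP polyhedron to be \emph{hypercubelike} rather than merely simple: with three or more choices at some Maximizer node the corresponding coordinate direction would meet more than two facets and the hypercube picture underlying \cref{def:graphgame} would fail.
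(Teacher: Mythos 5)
Your proof is correct and is exactly the paper's argument: the theorem is obtained by composing \cref{lem:gamesHR}, \cref{lem:HR-MC} and \cref{lem:MC-LS}, with all substantive work residing in those lemmas. Your added remark on why the two-outgoing-edges restriction yields a hypercubelike (not merely simple) polyhedron is consistent with the paper and a reasonable clarification.
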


From the many equivalent definitions of lopsidedness, we now get a number of statements about strategy sets for these games. An interesting non-trivial result is the following.
\begin{corollary}
    Given two admissible strategies $\sigma$, $\sigma'$ in an LSP $\mathcal{G}$ that differ in $k$ positions, there exists a sequence of admissible strategies $\sigma_0,\sigma_1,\sigma_2,\ldots,\sigma_k$ such that each strategy differs from the previous by one edge, and with $\sigma_0=\sigma$ and $\sigma_k=\sigma'$.
\end{corollary}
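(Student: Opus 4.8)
The plan is to induct on the number $k$ of nodes where $\sigma$ and $\sigma'$ disagree. For $k\le 1$ the sequence is trivial, so fix $k\ge 2$ and admissible $\sigma\neq\sigma'$; let $S=\{v\in \vm:\sigma(v)\neq\sigma'(v)\}$, and for $v\in S$ let $\sigma_v$ denote the strategy obtained from $\sigma$ by changing its choice at $v$ to $\sigma'(v)$. The whole statement reduces to one claim: \emph{there is a $v\in S$ for which $\sigma_v$ is admissible.} Granting it, $\sigma_v$ and $\sigma'$ disagree in only $k-1$ nodes, so by the induction hypothesis there is an admissible sequence from $\sigma_v$ to $\sigma'$ changing one edge at a time; prepending the single switch $\sigma\to\sigma_v$ gives the desired $\sigma_0,\ldots,\sigma_k$. (As in the rest of this section we take the LSP to be nondegenerate, degenerate instances being handled by perturbation; only the combinatorial parts, items~1 and~2 of \cref{thm:main}, will be used.)

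To prove the claim I would argue inside the feasible polyhedron $P$ of \eqref{eq:LSPLP}. By \cref{thm:main} this is a simple polyhedron whose vertices are exactly the bases of admissible strategies, and for the vertex of $\sigma$ and a nonbasic variable $x_e$ with $e=(v,u)\notin\sigma$, the incident edge of $P$ either leads to the vertex of $\sigma[e]$ — precisely when $\sigma[e]$ is admissible — or is an unbounded ray of $P$ (the usual ratio-test dichotomy, underlying the pivot step in the proof of item~3). Let $r_e$ be the corresponding edge direction, normalized so that the $x_e$-coordinate of $r_e$ equals $1$ and the $x_{e'}$-coordinate vanishes for every other nonbasic $e'$. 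Since $P$ is convex and $\sigma'\in P$, the segment from $\sigma$ to $\sigma'$ shows that $\sigma'-\sigma$ lies in the tangent cone of $P$ at $\sigma$, which for a simple polyhedron is the simplicial cone generated by the $r_e$; hence $\sigma'-\sigma=\sum_e\lambda_e r_e$ with all $\lambda_e\ge 0$. Reading off the $x_{e'}$-coordinate of both sides gives $\lambda_{e'}=x_{e'}(\sigma')$, which is nonzero only for $e'\in\sigma'\setminus\sigma$, i.e.\ for $e'=(v,\sigma'(v))$ with $v\in S$; as $\sigma\neq\sigma'$, at least one such $\lambda_{e'}$ is positive.

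It remains to rule out that every edge direction $r_{(v,\sigma'(v))}$, $v\in S$, is an unbounded ray of $P$. If they all were, each would be a recession direction of $P$, and hence so would their nonnegative combination $\sigma'-\sigma$; but then $\sigma+t(\sigma'-\sigma)\in P$ for all $t\ge 0$, so $\sigma'$ would be the midpoint of the two distinct points $\sigma$ and $2\sigma'-\sigma$ of $P$, contradicting that $\sigma'$ is a vertex. Therefore some $r_{(v,\sigma'(v))}$ is a genuine edge whose other endpoint is the vertex of $\sigma_v$, so $\sigma_v$ is admissible, which proves the claim. The part that needs the most care is the coordinate bookkeeping in the middle paragraph — matching the edge directions $r_e$ with the single-node switches and verifying $\lambda_{e'}=x_{e'}(\sigma')$ — while everything else is standard polyhedral geometry. (When every Maximizer node has at most two outgoing edges, the statement alternatively follows from \cref{thm:gameslopsided} via the fact that a lopsided set is an isometric subgraph of the hypercube — any two of its points are joined by a shortest hypercube path lying inside it — one of the equivalent characterizations in \cite{bandelt_combinatorics_2006}.)
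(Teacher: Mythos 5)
Your argument is correct, but it proves the corollary by a genuinely different route than the paper. The paper never argues inside the polyhedron: it passes to the restricted LSP with edge set $\sigma\cup\sigma'\cup E_{\min}$, in which every Maximizer node has at most two outgoing edges, applies \cref{thm:gameslopsided} to get that the admissible strategies of this subgame form a lopsided set, and then cites the isometry property of lopsided sets \cite[Lem. 1]{bandelt_combinatorics_2006} to obtain a geodesic of admissible strategies, which remain admissible in $\mathcal{G}$ because the relevant subgraphs are unchanged. Your closing parenthetical gestures at exactly this, but the ingredient you would still need there is the restriction trick $\sigma\cup\sigma'\cup E_{\min}$, which is what removes the two-outgoing-edges hypothesis. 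Your main proof instead works directly with the LP \eqref{eq:LSPLP}: induction on $k$, expressing the difference of the two basic feasible solutions in the simplicial tangent cone at the vertex of $\sigma$, noting the coefficients are supported on the switch directions $r_{(v,\sigma'(v))}$, $v\in S$, and excluding that all of these are recession directions via the midpoint/extreme-point argument at the vertex of $\sigma'$; the identification of a bounded edge's endpoint with the basis of $\sigma[e]$ is exactly the pivot analysis from item 3 of \cref{thm:main}, and the strict positivity of basic variables (item 1 with $b>0$) guarantees the coefficients you read off are positive. This is more elementary and self-contained (no lopsidedness or multi-connectivity machinery, no external isometry result, no reduction to the two-choice case), while the paper's route buys the corollary as a one-line consequence of a stronger structural statement about strategy sets. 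Two minor remarks: the ``precisely when $\sigma[e]$ is admissible'' half of your dichotomy is never used (only bounded edge $\Rightarrow$ admissible endpoint is needed); and your one-word appeal to perturbation for degenerate instances deserves a sentence of care, since the perturbations of the degeneracy-handling section can enlarge the set of admissible strategies (zero-weight cycles become positive), so a path admissible in the perturbed game need not stay admissible in the original --- though this caveat applies equally to the paper's own proof, which also rests on the nondegenerate machinery of \cref{thm:main}.
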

\begin{proof}
    Consider the altered LSP $\mathcal{G'}$ with the edge set $\sigma\cup\sigma'\cup E_{\min}$. In $\mathcal{G'}$, every Maximizer node has at most two outgoing edges. From \cref{thm:gameslopsided} and from the fact that lopsided sets are isometric \cite[Lem. 1]{bandelt_combinatorics_2006}, it follows there is a sequence of $k$ switches to get from $\sigma$ to $\sigma'$ in $\mathcal{G'}$, such that each intermediate strategy is admissible. Since every admissible strategy in $\mathcal{G}'$ is also admissible in $\mathcal{G}$, we get what we needed to show.
\end{proof}

\section{Discussion and future work}
We showed that for each nondegenerate LSP, there exists a linear program such that each run of the simplex algorithm corresponds to a run of strategy improvement, and vice versa. This creates a powerful new method to perform worst-case analysis for the simplex algorithm. However, in general, the size of the coefficients of the LP will be doubly exponential due to the large coefficient~$t$. This means that this method cannot be naively used to show long runtime in terms of input size and (Turing) polynomial running time, but only in terms of number of input variables, and in terms of determining if a strongly polynomial algorithm exists.

This also implies that the LP described cannot be implemented naively, as the coefficients are too large. While for the nondegenerate case one can, of course, simulate a run of the simplex algorithm with an LSP, it is unclear if there is any way to efficiently compute improving moves in the degenerate case.

There is some conceptual similarity between the hypercube-realizable sets defined in this paper and `realizable' sets as defined in \cite{lawrence_lopsided_1983}. That is since they both relate to convex shapes and are lopsided. The question whether there is a connection between the two is still open, just as the following question:
\begin{question}
    Are all lopsided sets multi-connected?
\end{question}
Simple enumeration tells us that the answer to this is `yes' for $d\leq 5$, but it is unclear of this holds in general.

\section*{Acknowledgements}
Thanks to Georg Loho for guidance and countless helpful suggestions, and thanks to Nils Mosis for the discussions that inspired some of the results in this paper.

\bibliography{references}

\begin{thebibliography}{22}
\providecommand{\natexlab}[1]{#1}
\providecommand{\url}[1]{\texttt{#1}}
\expandafter\ifx\csname urlstyle\endcsname\relax
  \providecommand{\doi}[1]{doi: #1}\else
  \providecommand{\doi}{doi: \begingroup \urlstyle{rm}\Url}\fi

\bibitem[Adiprasito et~al.(2023)Adiprasito, Kalmanovich, and
  Nevo]{adiprasito_realization_2023}
Karim Adiprasito, Daniel Kalmanovich, and Eran Nevo.
\newblock On the realization space of the cube.
\newblock \emph{Journal of the European Mathematical Society}, 26\penalty0
  (1):\penalty0 261--273, June 2023.
\newblock ISSN 1435-9855.
\newblock \doi{10.4171/jems/1361}.
\newblock URL \url{https://ems.press/journals/jems/articles/11115709}.

\bibitem[Akian et~al.(2012)Akian, Gaubert, and Guterman]{akian_tropical_2012}
Marianne Akian, Stéphane Gaubert, and Alexander Guterman.
\newblock Tropical polyhedra are equivalent to mean payoff games.
\newblock \emph{International Journal of Algebra and Computation}, 22\penalty0
  (01):\penalty0 1250001, February 2012.
\newblock ISSN 0218-1967.
\newblock \doi{10.1142/S0218196711006674}.
\newblock URL
  \url{https://www.worldscientific.com/doi/abs/10.1142/S0218196711006674}.
\newblock Publisher: World Scientific Publishing Co.

\bibitem[Allamigeon et~al.(2014)Allamigeon, Benchimol, and
  Gaubert]{allamigeon_tropical_2014}
Xavier Allamigeon, Pascal Benchimol, and Stéphane Gaubert.
\newblock The {Tropical} {Shadow}-{Vertex} {Algorithm} {Solves} {Mean} {Payoff}
  {Games} in {Polynomial} {Time} on {Average}.
\newblock In Javier Esparza, Pierre Fraigniaud, Thore Husfeldt, and Elias
  Koutsoupias, editors, \emph{Automata, {Languages}, and {Programming}}, pages
  89--100, Berlin, Heidelberg, 2014. Springer.
\newblock ISBN 978-3-662-43948-7.
\newblock \doi{10.1007/978-3-662-43948-7_8}.

\bibitem[Allamigeon et~al.(2015)Allamigeon, Benchimol, Gaubert, and
  Joswig]{allamigeon2015tropicalizing}
Xavier Allamigeon, Pascal Benchimol, St{\'e}phane Gaubert, and Michael Joswig.
\newblock Tropicalizing the simplex algorithm.
\newblock \emph{SIAM Journal on Discrete Mathematics}, 29\penalty0
  (2):\penalty0 751--795, 2015.
\newblock \doi{10.1137/130936464}.

\bibitem[Avis and Friedmann(2017)]{avis_exponential_2017}
David Avis and Oliver Friedmann.
\newblock An exponential lower bound for {Cunningham}’s rule.
\newblock \emph{Mathematical Programming}, 161\penalty0 (1):\penalty0 271--305,
  January 2017.
\newblock ISSN 1436-4646.
\newblock \doi{10.1007/s10107-016-1008-4}.
\newblock URL
  \url{https://link.springer.com/article/10.1007/s10107-016-1008-4}.
\newblock Company: Springer Distributor: Springer Institution: Springer Label:
  Springer Publisher: Springer Berlin Heidelberg.

\bibitem[Bandelt et~al.(2006)Bandelt, Chepoi, Dress, and
  Koolen]{bandelt_combinatorics_2006}
Hans-Jürgen Bandelt, Victor Chepoi, Andreas Dress, and Jack Koolen.
\newblock Combinatorics of lopsided sets.
\newblock \emph{European Journal of Combinatorics}, 27\penalty0 (5):\penalty0
  669--689, July 2006.
\newblock ISSN 0195-6698.
\newblock \doi{10.1016/j.ejc.2005.03.001}.
\newblock URL
  \url{https://www.sciencedirect.com/science/article/pii/S0195669805000491}.

\bibitem[Björklund and Vorobyov(2007)]{bjorklund_combinatorial_2007}
Henrik Björklund and Sergei Vorobyov.
\newblock A combinatorial strongly subexponential strategy improvement
  algorithm for mean payoff games.
\newblock \emph{Discrete Applied Mathematics}, 155\penalty0 (2):\penalty0
  210--229, January 2007.
\newblock ISSN 0166-218X.
\newblock \doi{10.1016/j.dam.2006.04.029}.
\newblock URL
  \url{https://www.sciencedirect.com/science/article/pii/S0166218X06002137}.

\bibitem[Brihaye et~al.(2017)Brihaye, Geeraerts, Haddad, and
  Monmege]{brihaye_pseudopolynomial_2017}
Thomas Brihaye, Gilles Geeraerts, Axel Haddad, and Benjamin Monmege.
\newblock Pseudopolynomial iterative algorithm to solve total-payoff games and
  min-cost reachability games.
\newblock \emph{Acta Informatica}, 54\penalty0 (1):\penalty0 85--125, February
  2017.
\newblock ISSN 1432-0525.
\newblock \doi{10.1007/s00236-016-0276-z}.
\newblock URL
  \url{https://link.springer.com/article/10.1007/s00236-016-0276-z}.
\newblock Company: Springer Distributor: Springer Institution: Springer Label:
  Springer Publisher: Springer Berlin Heidelberg.

\bibitem[Disser et~al.(2023)Disser, Friedmann, and
  Hopp]{disser_exponential_2023}
Yann Disser, Oliver Friedmann, and Alexander~V. Hopp.
\newblock An exponential lower bound for {Zadeh}’s pivot rule.
\newblock \emph{Mathematical Programming}, 199\penalty0 (1):\penalty0 865--936,
  May 2023.
\newblock ISSN 1436-4646.
\newblock \doi{10.1007/s10107-022-01848-x}.
\newblock URL \url{https://doi.org/10.1007/s10107-022-01848-x}.

\bibitem[Fijalkow et~al.(2023)Fijalkow, Bertrand, Bouyer-Decitre, Brenguier,
  Carayol, Fearnley, Gimbert, Horn, Ibsen-Jensen, Markey, Monmege, Novotný,
  Randour, Sankur, Schmitz, Serre, and Skomra]{fijalkow_games_2023}
Nathanaël Fijalkow, Nathalie Bertrand, Patricia Bouyer-Decitre, Romain
  Brenguier, Arnaud Carayol, John Fearnley, Hugo Gimbert, Florian Horn, Rasmus
  Ibsen-Jensen, Nicolas Markey, Benjamin Monmege, Petr Novotný, Mickael
  Randour, Ocan Sankur, Sylvain Schmitz, Olivier Serre, and Mateusz Skomra.
\newblock Games on {Graphs}, May 2023.
\newblock URL \url{http://arxiv.org/abs/2305.10546}.
\newblock arXiv:2305.10546 [cs].

\bibitem[Friedmann(2011)]{Friedmann2011ExponentialPrograms}
Oliver Friedmann.
\newblock \emph{Exponential Lower Bounds for Solving Infinitary Payoff Games
  and Linear Programs}.
\newblock PhD thesis, Ludwig Maximilians University Munich, 2011.
\newblock URL \url{http://edoc.ub.uni-muenchen.de/13294/}.

\bibitem[Friedmann et~al.(2011{\natexlab{a}})Friedmann, Hansen, and
  Zwick]{friedmann2011randomfacet}
Oliver Friedmann, Thomas~Dueholm Hansen, and Uri Zwick.
\newblock A subexponential lower bound for the random facet algorithm for
  parity games.
\newblock In \emph{Proceedings of the Twenty-Second Annual ACM-SIAM Symposium
  on Discrete Algorithms}, pages 202--216. SIAM, 2011{\natexlab{a}}.
\newblock \doi{10.1137/1.9781611973082.19}.

\bibitem[Friedmann et~al.(2011{\natexlab{b}})Friedmann, Hansen, and
  Zwick]{friedmann_subexponential_2011}
Oliver Friedmann, Thomas~Dueholm Hansen, and Uri Zwick.
\newblock Subexponential lower bounds for randomized pivoting rules for the
  simplex algorithm.
\newblock In \emph{Proceedings of the forty-third annual {ACM} symposium on
  {Theory} of computing}, {STOC} '11, pages 283--292, New York, NY, USA, June
  2011{\natexlab{b}}. Association for Computing Machinery.
\newblock ISBN 978-1-4503-0691-1.
\newblock \doi{10.1145/1993636.1993675}.
\newblock URL \url{https://dl.acm.org/doi/10.1145/1993636.1993675}.

\bibitem[Friedmann et~al.(2014)Friedmann, Hansen, and
  Zwick]{friedmann_errata_2014}
Oliver Friedmann, Thomas~Dueholm Hansen, and Uri Zwick.
\newblock Errata for: {A} subexponential lower bound for the {Random} {Facet}
  algorithm for {Parity} {Games}, October 2014.
\newblock URL \url{http://arxiv.org/abs/1410.7871}.
\newblock arXiv:1410.7871 [cs].

\bibitem[Joswig et~al.(2001)Joswig, Kaibel, Pfetsch, and
  Ziegler]{joswig2001vertex}
Michael Joswig, Volker Kaibel, Marc~E Pfetsch, and G{\"u}nter~M Ziegler.
\newblock Vertex-facet incidences of unbounded polyhedra.
\newblock \emph{Advances in Geometry}, 1, 2001.
\newblock ISSN 1615-7168.
\newblock \doi{10.1515/advg.2001.002}.

\bibitem[Lawrence(1983)]{lawrence_lopsided_1983}
James Lawrence.
\newblock Lopsided sets and orthant-intersection by convex sets.
\newblock \emph{Pacific Journal of Mathematics}, 104\penalty0 (1):\penalty0
  155--173, January 1983.
\newblock ISSN 0030-8730.
\newblock URL \url{https://msp.org/pjm/1983/104-1/p12.xhtml}.
\newblock Publisher: Mathematical Sciences Publishers.

\bibitem[Ohlmann(2022)]{ohlmann_gkk_2022}
Pierre Ohlmann.
\newblock The {GKK} {Algorithm} is the {Fastest} over {Simple}
  {Mean}-{Payoff} {Games}.
\newblock In Alexander~S. Kulikov and Sofya Raskhodnikova, editors,
  \emph{Computer {Science} – {Theory} and {Applications}}, Lecture {Notes} in
  {Computer} {Science}, pages 269--288, Cham, 2022. Springer International
  Publishing.
\newblock ISBN 978-3-031-09574-0.
\newblock \doi{10.1007/978-3-031-09574-0_17}.

\bibitem[Puri(1997)]{puri_theory_1997}
A.~Puri.
\newblock \emph{Theory of hybrid systems and discrete event systems.}
\newblock PhD thesis, University of California, Berkeley, 1997.
\newblock URL \url{https://www.elibrary.ru/item.asp?id=5408583}.

\bibitem[Puterman(1994)]{puterman1994markov}
Martin~L. Puterman.
\newblock \emph{Markov decision processes: discrete stochastic dynamic
  programming}.
\newblock John Wiley \& Sons, 1994.

\bibitem[Schewe(2009)]{schewe_parity_2009}
Sven Schewe.
\newblock From {Parity} and {Payoff} {Games} to {Linear} {Programming}.
\newblock In Rastislav Královič and Damian Niwiński, editors,
  \emph{Mathematical {Foundations} of {Computer} {Science} 2009}, Lecture
  {Notes} in {Computer} {Science}, pages 675--686, Berlin, Heidelberg, 2009.
  Springer.
\newblock ISBN 978-3-642-03816-7.
\newblock \doi{10.1007/978-3-642-03816-7_57}.

\bibitem[van Dijk et~al.(2024)van Dijk, Loho, and
  Maat]{van_dijk_worst-case_2024}
Tom van Dijk, Georg Loho, and Matthew~T. Maat.
\newblock The {Worst}-{Case} {Complexity} of {Symmetric} {Strategy}
  {Improvement}.
\newblock In \emph{32nd {EACSL} {Annual} {Conference} on {Computer} {Science}
  {Logic} ({CSL} 2024)}, volume 288 of \emph{Leibniz {International}
  {Proceedings} in {Informatics} ({LIPIcs})}, pages 24:1--24:19, Dagstuhl,
  Germany, 2024. Schloss Dagstuhl – Leibniz-Zentrum für Informatik.
\newblock ISBN 978-3-95977-310-2.
\newblock \doi{10.4230/LIPIcs.CSL.2024.24}.
\newblock URL
  \url{https://drops.dagstuhl.de/entities/document/10.4230/LIPIcs.CSL.2024.24}.
\newblock ISSN: 1868-8969.

\bibitem[Vöge and Jurdziński(2000)]{voge_discrete_2000}
Jens Vöge and Marcin Jurdziński.
\newblock A {Discrete} {Strategy} {Improvement} {Algorithm} for {Solving}
  {Parity} {Games}.
\newblock In \emph{Computer {Aided} {Verification}}, volume 1855, pages
  202--215. Springer Berlin Heidelberg, Berlin, Heidelberg, 2000.
\newblock ISBN 978-3-540-67770-3 978-3-540-45047-4.
\newblock \doi{10.1007/10722167_18}.
\newblock URL \url{http://link.springer.com/10.1007/10722167_18}.
\newblock Series Title: Lecture Notes in Computer Science.

\end{thebibliography}

\end{document}